\let\oldvec\vec
\let\vec\oldvec
\renewcommand{\S}{S}
\newcommand{\R}{R}
\newcommand{\@chapapp}{\relax}%
\noindent\makebox[0mm][r]{$\bullet$}}
\title{A reaction network scheme which implements inference and learning for Hidden Markov Models}
\author{Abhinav Singh\inst{1}\and Carsten Wiuf\inst{2} \and Abhishek Behera\inst{3} \and Manoj Gopalkrishnan\inst{3}}
\institute{UM-DAE Centre for Excellence in Basic Sciences, Mumbai, India\\ \and
  Department of Mathematical Sciences, University of Copenhagen, Denmark\\
  \and Indian Institute of Technology Bombay, Mumbai, India\\	
  \email{\{abhinavsns7, abhishek.enlightened, manoj.gopalkrishnan\}@gmail.com \\  wiuf@math.ku.dk}}
\date{7 March 2019}
\begin{document}
\maketitle
\begin{abstract}
  With a view towards molecular communication systems and molecular multi-agent systems, we propose the Chemical Baum-Welch Algorithm, a novel reaction network scheme that learns parameters for Hidden Markov Models (HMMs). Each reaction in our scheme changes only one molecule of one species to one molecule of another. The reverse change is also accessible but via a different set of enzymes, in a design reminiscent of futile cycles in biochemical pathways. We show that every fixed point of the Baum-Welch algorithm for HMMs is a fixed point of our reaction network scheme, and every positive fixed point of our scheme is a fixed point of the Baum-Welch algorithm. We prove that the ``Expectation'' step and the ``Maximization'' step of our reaction network separately converge exponentially fast. We simulate mass-action kinetics for our network on an example sequence, and show that it learns the same parameters for the HMM as the Baum-Welch algorithm.
\end{abstract}
\section{Introduction}
The sophisticated behavior of living cells on short timescales is powered by biochemical reaction networks. One may say that evolution has composed the symphony of the biosphere, genetic machinery conducts the music, and reaction networks are the orchestra. Understanding the capabilities and limits of this molecular orchestra is key to understanding living systems, as well as to engineering molecular systems that are capable of sophisticated life-like behavior. 

The technology of implementing abstract reaction networks with molecules is a subfield of molecular systems engineering that has witnessed rapid advances in recent times. Several researchers  
\cite{soloveichik2010dna,srinivas2015programming,qian2011efficient,Cardelli_2011StrandAlgebra,lakin2011abstractions,cardelli2013two,chen2013programmable,LAKIN201621} 
have proposed theoretical schemes for implementing arbitrary reaction networks with DNA oligonucleotides. There is a growing body of experimental demonstrations of such schemes 
\cite{Srinivasetal2052,cherry2018scaling,srinivas2015programming,chen2013programmable,zechner2016molecular}. 
A stack of tools is emerging to help automate the design process. We can now compile abstract reaction networks to a set of DNA oligonucleotides that will implement the dynamics of the network in solution~\cite{badelt2017general}. We can computationally simulate the dynamics of these oligonucleotide molecular systems~\cite{lakin2011visual} to allow debugging prior to experimental implementation. In view of these rapid advances, the study of reaction networks from the point of view of their computational capabilities has become even more urgent. 

It has long been known that reaction networks can compute any computable function~\cite{hjelmfelt1991chemical}. The literature has several examples of reaction network schemes that have been inspired by known algorithms~\cite{buisman2009computing,Klavins_2011Biomolecular,soloveichik2008computation,chen2014deterministic,qian2011scaling,napp2013message,Winfree_2011Neural,cardelli2018chemical}. Our group has previously described reaction network schemes that solve statistical problems like maximum likelihood~\cite{gopalkrishnan2016scheme}, sampling a conditional distribution and inference~\cite{virinchi2017stochastic}, and learning from partial observations~\cite{virinchi2018reaction}. These schemes exploit the thermodynamic nature of the underlying molecular systems that will implement these reaction networks, and can be expressed in terms of variational ideas involving minimization of Helmholtz free energy~\cite{Amari2016,csiszar2003information,jaynes1957information}.

In this paper, we consider situations where partial information about the environment is available to a cell in the form of a sequence of observations. For example, this might happen when an enzyme is acting processively on a polymer, or a molecular walker~\cite{shin2004synthetic,Reif_2003WalkerRoller,Seeman_2004Walker} is trying to locate its position on a grid. In situations like this, multiple observations are not independent. Such sequences can not be summarized merely by the {\em type} of the sequence~\cite{cover2012elements}, i.e., the number of times various symbols occur. Instead, the order of various observations carries information about state changes in the process producing the sequence. The number of sequences grows exponentially with length, and our previously proposed schemes are algorithmically inadequate. To deal with such situations requires a pithy representation of sequences, and a way of doing inference and learning directly on such representations. In Statistics and Machine Learning, this problem is solved by \textbf{Hidden Markov Models (HMMs)}~\cite{18626}.

HMMs are a widely used model in Machine Learning, powering sequence analysis applications like speech recognition~\cite{juang1991hidden}, handwriting recognition, and bioinformatics. They are also essential components of communication systems as well as of intelligent agents trained by reinforcement learning methods. In this article, \textbf{we describe a reaction network scheme which implements the Baum-Welch algorithm}. The Baum-Welch algorithm is an iterative algorithm for learning HMM parameters. Reaction networks that can do such statistical analysis on sequences are likely to be an essential component of molecular communication systems, enabling cooperative behavior among a population of artificial cells. Our main contributions are:
\begin{enumerate}
\item In Section~\ref{sec:hmm}, we describe what the reader needs to know about HMMs and the Baum-Welch algorithm to be able to follow the subsequent constructions. No prerequisites are assumed beyond familiarity with matrices and probability distributions.
\item In Section~\ref{sec::reac}, we describe a novel reaction network scheme to learn parameters for an HMM. 
\item We prove in Theorem~\ref{thm:bw} that every fixed point of the Baum-Welch algorithm is also a fixed point of the continuous dynamics of this reaction network scheme. 
\item In Theorem~\ref{thm:bwcrn}, we prove that every positive fixed point of the dynamics of our reaction network scheme is a fixed point of the Baum-Welch algorithm. 
\item In Theorem~\ref{th:EMexpconv1} and Theorem~\ref{th:EMexpconv2}, we prove that subsets of our reaction network scheme which correspond to the Expectation step and the Maximization step of the Baum-Welch algorithm both separately converge exponentially fast.
\item In Example~\ref{Ex::1}, we simulate our reaction network scheme on an input sequence and show that the network dynamics is successfully able to learn the same parameters as the Baum-Welch algorithm.
\item We show in Example~\ref{Ex::2} that when the baum-welch fixed point is on the boundary then our scheme need not converge to a Baum-Welch fixed point. However, we conjecture that if there exists a positive Baum-Welch fixed point then our scheme must always converge to a Baum-Welch fixed point. In particular, there would be a positive Baum-Welch fixed point if the true HMM generating the sequence has all parameters positive, and the observed sequence is long enough.
\end{enumerate}

\section{Hidden Markov Models and the Baum Welch Algorithm}\label{sec:hmm}

\begin{figure}[h!]\label{fig:hmm}
  \centering
  \subfloat[\textbf{Hidden Markov Model}]{
    \raisebox{15pt}
    {
      \def\svgscale{1.0}
      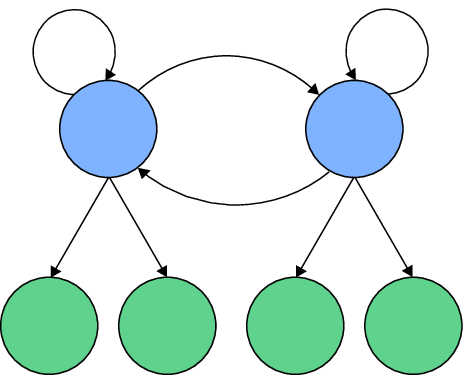}
    \label{fig:hmm1}
  }
  \subfloat[\textbf{Forward Algorithm}]{{
      \def\svgscale{1}
      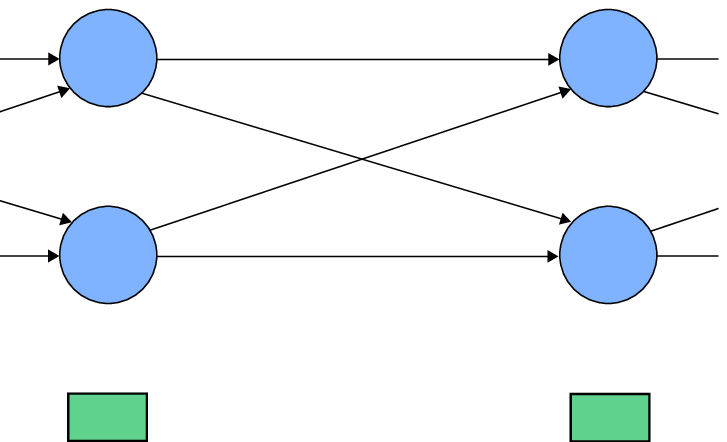\label{fig:forward} }}\
  \vspace{3mm}
  \\\subfloat[\textbf{Backward Algorithm}]{{
      \def\svgscale{1}
      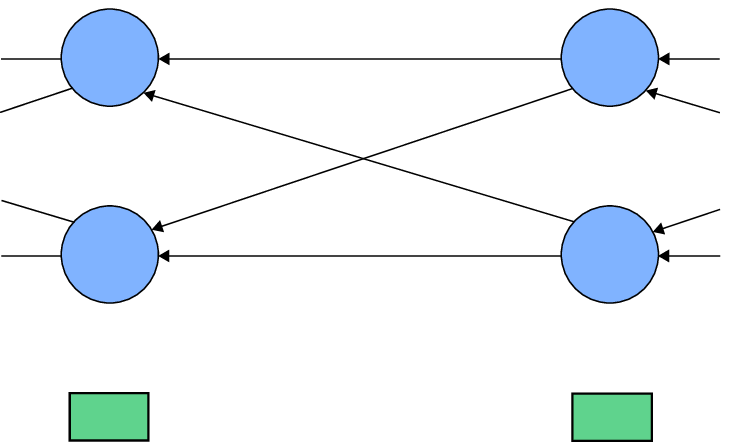\label{fig:backward} }}
  \subfloat[\textbf{Baum-Welch Algorithm}]{{
      \def\svgscale{.95}
      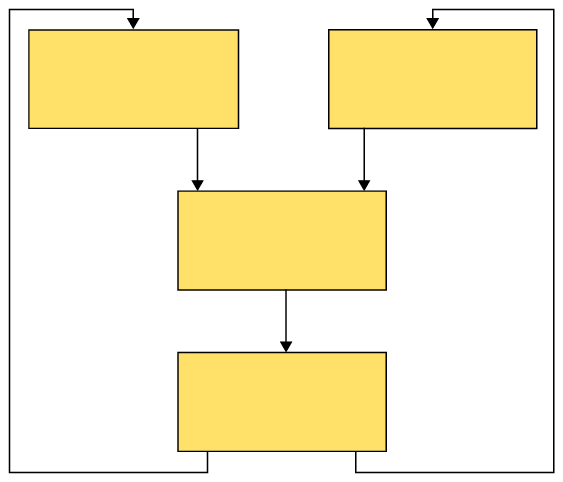\label{fig:bw} }}		
  \caption{\textbf{Learning HMMs from sequences.} (a) \textbf{HMM}: The hidden states $H_1$ and $H_2$ are not directly observable. Instead what are observed are elements $V_1,V_2$ from the set $V=\{V_1,V_2\}$ of ``visible states.'' The parameters $\theta_{11},\theta_{12},\theta_{21},\theta_{22}$ denote the probability of transitions between the hidden states. The probability of observing states $V_1,V_2$ depends on the parameters $\psi_{11},\psi_{12},\psi_{21},\psi_{22}$ as indicated in the figure. 
    (b) The \textbf{forward algorithm} computes the position $l+1$ likelihood $\alpha_{l+1,1} = \alpha_{l1}\theta_{11}\psi_{1v_{l+1}} +  \alpha_{l2}\theta_{21}\psi_{1v_{l+1}}$ by forward propagating the position $l$ likelihoods $\alpha_{l1}$ and $\alpha_{l2}$. Here $v_l,v_{l+1}\in V$ are the observed emissions at position $l$ and $l+1$.
    (c) The \textbf{backward algorithm} computes the position $l-1$ conditional probability $\beta_{l-1,1} = \theta_{11}\psi_{1v_l}\beta_{l1}+\theta_{12}\psi_{2v_l}\beta_{l2}$ by propagating the position $l$ conditional probabilities $\beta_{l1}$ and $\beta_{l2}$ backwards.
    (d) The \textbf{Baum-Welch Algorithm} is a fixed point Expectation-Maximization computation. The E step calls the forward and backward algorithm as subroutines and, conditioned on the entire observed sequence $(v_1,v_2,\dots,v_L)\in V^L$, computes  the probabilities $\gamma_{lg}$ of being in states $g\in H$ at position $l$  and the probabilities $\xi_{lgh}$ of taking the transitions $gh\in H^2$ at position $l$. The M step updates the parameters $\theta$ and $\psi$ to maximize their likelihood given the observed sequence.}
\end{figure}

Fix two finite sets $P$ and $Q$. A \textbf{stochastic map} is a $|P|\times |Q|$ matrix $A=(a_{pq})_{|P|\times |Q|}$ such that $a_{pq}\geq 0$ for all $p\in P$ and $q\in Q$, and $\sum_{q\in Q} a_{pq} = 1$ for all $p\in P$. Intuitively, stochastic maps represent conditional probability distributions. 

An \textbf{HMM} $(H,V,\theta,\psi,\pi )$ consists of finite sets $H$ (for `hidden') and $V$ 
(for `visible'), a  stochastic map $\theta$ from $H$ to $H$ called the \textbf{transition 
  matrix}, a stochastic map $\psi$ from $H$ to $V$ called the \textbf{emission matrix}, and 
an \textbf{initial probability distribution} $\pi = (\pi_h)_{h\in H}$ on $H$, i.e., $\pi_h
\geq 0$ for all $h\in H$ and $\sum_{h\in H} \pi_h = 1$. See Figure~\ref{fig:hmm1} for an 
example.

Suppose a length $L$ sequence $(v_1,v_2,\dots,v_L)\in V^L$ of visible states is observed due to a hidden sequence $(x_1,x_2, \dots,x_L)\in H^L$. The following questions related to an HMM are commonly studied:
\begin{enumerate}
\item \textbf{Likelihood:} For fixed $\theta, \psi$, compute the likelihood $\Pr(v_1,v_2,\dots,v_L\mid \theta,\psi)$. This problem is solved by the \textbf{forward-backward algorithm}.
\item \textbf{Learning:} Estimate the parameters $\hat\theta,\hat\psi$ which maximize the likelihood of the observed sequence $(v_1,v_2,\dots,v_L)\in V^L$. This problem is solved by the \textbf{Baum-Welch algorithm} which is an Expectation-Maximization (EM) algorithm. It uses the forward-backward algorithm as a subroutine to compute the E step of EM.
\item  \textbf{Decoding:} For fixed $\theta,\psi$, find the sequence $(\hat h_1,\hat h_2,\dots,\hat h_l,\dots,\hat h_L)\in H^L$ that has the highest probability of producing the given observed sequence $(v_1,v_2,\dots,v_L)$. This problem is solved by the \textbf{Viterbi algorithm}.
\end{enumerate}

The \textbf{forward algorithm}~(Figure~\ref{fig:forward}) takes as input an HMM $(H,V, \theta,\psi,\pi)$ and a length $L$ observation sequence $(v_1,v_2,\dots,v_L) \in V^L$ and outputs the $L\times |H|$ likelihoods $\alpha_{l h} = \Pr[v_1,v_2,\dots,v_l,x_l=h \mid \theta,\psi]$ of observing symbols $v_1,\ldots,v_l$ and being in the hidden state $h\in H$ at time $l$. It does so using the following recursion.
\begin{itemize}
\item Initialisation: $\alpha_{1h} = \pi_h\psi_{hv_1}$ for all $h\in H$,
\item Recursion: $\alpha_{l h}=\sum_{g\in H} \alpha_{l-1,g}\theta_{gh}\psi_{h v_l}$ for all $h\in H$ and $l=2,\ldots,L$.
\end{itemize} 

The \textbf{backward algorithm}~(Figure~\ref{fig:backward}) takes as input an HMM $(H,V, \theta,\psi,\pi)$ and a length $L$ observation sequence $(v_1,v_2,\dots,v_L) \in V^L$ and outputs the $L\times |H|$ conditional probabilities\
\\ $\beta_{l h} = \Pr[v_{l+1},v_{l+2},\dots,v_L \mid x_l=h, \theta,\psi]$ of observing symbols $v_{l+1},\ldots,v_L$ given that the hidden state $x_l$ at time $l$ has label $h\in H$.
\begin{itemize}
\item Initialisation: $\beta_{L h}=1$, \quad for all $h\in H$,
\item Recursion: $\beta_{l h}=\sum_{g\in H} \theta_{hg}\psi_{g v_{l+1}}\beta_{l+1,g}$ for all $h\in H$ and $l=1,\ldots,L-1$.
\end{itemize} 

The \textbf{E step} for the Baum-Welch algorithm takes as input an HMM $(H,V, \theta,\psi,\pi)$ and a length $L$ observation sequence $(v_1,v_2,\dots,v_L)$. It outputs the $L\times H$ conditional probabilities $\gamma_{lh}=\Pr[x_l = h \mid \theta,\psi, v]$ of being in hidden state $h$ at time $l$ conditioned on the observed sequence $v$ by:
\[
  \gamma_{lh}=\frac{\alpha_{l h}\beta_{l h}}{\sum_{g\in H} \alpha_{l g}\beta_{l g}}
\]
for all $h\in H$ and $l=1,2,\dots,L-1$. It also outputs the $(L-1)\times H \times H$ probabilities $\xi_{lgh}= \Pr[x_l=g, x_{l+1}=h \mid \theta,\psi, v]$ of transitioning along the edge $(g,h)$ at time $l$ conditioned on the observed sequence $v$ by:
\[
  \xi_{lgh}=\frac{\alpha_{l g}\theta_{gh}\psi_{ hv_{l+1}}\beta_{l+1,h}}{\sum_{f\in H} \alpha_{l f}\beta_{l f}}
\]
for all $ g,h\in H$ and $l=1,\ldots,L-1$. 

\begin{remark}\label{rmk:scale}
  Note that the E-step uses the forward and backward algorithms as subroutines to first compute the $\alpha$ and $\beta$ values. Further note that we don't need the forward and backward algorithms to return the actual values $\alpha_{lh}$ and $\beta_{lh}$. To be precise, let $\alpha_l=(\alpha_{lh})_{h\in H}\in\mathbb{R}^H$ denote the vector of forward likelihoods at time $l$. Then for the E step to work, we only need the direction of $\alpha_l$ and not the magnitude. This is because the numerator and denominator in the updates are both linear in $\alpha_{lh}$, and the magnitude cancels out. Similarly, if $\beta_l=(\beta_{lh})_{h\in H}\in\mathbb{R}^H$ denotes the vector of backward likelihoods at time $l$ then the $E$ step only cares about the direction of $\beta_l$ and not the magnitude. This scale symmetry is a useful property for numerical solvers. We will also make use of this freedom when we implement a lax forward-backward algorithm using reaction networks in the next section.
\end{remark}

The \textbf{M step} of the Baum-Welch algorithm takes as input the values $\gamma$ and $\xi$ that are output by the E step as reconstruction of the dynamics on hidden states, and outputs new Maximum Likelihood estimates of the parameters $\theta,\psi$ that best explain these values. The update rule turns out to be very simple. For all $g,h\in H$ and $w\in V$:
\begin{align*}
  \theta_{gh}\leftarrow\frac{\sum_{l=1}^{L-1}\xi_{lgh}}{\sum_{l=1}^{L-1}\sum_{f\in H}\xi_{lgf}},
  &&\psi_{hw}\leftarrow\frac{\sum_{l=1}^L\gamma_{lh}\delta_{w,v_l}}{\sum_{l=1}^L\gamma_{lh}}
\end{align*}
where $\delta_{w,v_l}=\begin{cases}
  1\text{ if }w=v_l\
  \\0\text{ otherwise}\end{cases}$ is the Dirac delta function.

\begin{remark}\label{rmk:scale2}
  Like in Remark~\ref{rmk:scale}, note that the M step does not require its inputs to be the actual values $\gamma$ and $\xi$. There is a scaling symmetry so that we only need the directions of the vectors $\gamma(h) = (\gamma_{lh})_{l=1,2,\dots,L}\in\mathbb{R}^L$ for all $h\in H$ and $\xi(g) = (\xi_{lgh})_{1\leq l\leq L-1, h\in H}\in\mathbb{R}^{(L-1)\times H}$ for all $g\in H$. This gives us the freedom to implement a lax E projection without affecting the M projection, and we will exploit this freedom when designing our reaction network.
\end{remark}

The \textbf{Baum-Welch algorithm}~(Figure~\ref{fig:bw}) is a fixed point EM computation that alternately runs the E step and the M step till the updates become small enough. It is guaranteed to converge to a fixed point $(\hat\theta,\hat\psi)$. However, the fixed point need not always be a global optimum. 

\section{Chemical Baum-Welch Algorithm}\label{sec::reac}
\subsection{Reaction Networks}
Following~\cite{virinchi2018reaction}, we recall some concepts from reaction network theory
\cite{feinberg72chemical,horn72necessary,Fein79,Manoj_2011Catalysis,anderson2010product,virinchi2017stochastic}.

Fix a finite set $S$ of species. An $S$-reaction, or simply a \textbf{reaction} when $S$ is understood from context, is a formal chemical equation
\[
  \sum_{X\in S} y_X X \rightarrow \sum_{X\in S}y'_XX
\]
where the numbers $y_X,y'_X\in\mathbb{Z}_{\geq 0}$ are the \textbf{stoichiometric coefficients} of species $X$ on the \textbf{reactant} side and \textbf{product} side respectively. A \textbf{reaction network} is a pair $(S,R)$ where $R$ is a finite set of $S$-reactions. A \textbf{reaction system} is a triple $(\S,\R,k)$ where $(\S,\R)$ is a reaction network and $k:\R\to\mathbb{R}_{> 0}$ is called the \textbf{rate function}.

As is common when specifying reaction networks, we will find it convenient to explicitly specify only a set of chemical equations, leaving the set of species to be inferred by the reader.

Fix a reaction system $(\S,\R,k)$. \textbf{Deterministic Mass Action Kinetics} describes a system of ordinary differential equations on the concentration variables $\{x_i(t)\mid i\in \S\}$ according to:
\[\dot x_i(t)=\sum_{a\to b\in\R}k_{a\to b}(b_i-a_i)\prod_{j\in\S}x_j(t)^{a_j}\] 

\subsection{Baum-Welch Reaction Network}\label{sec:BWRN}
In this section we will describe a reaction networks for each part of the Baum-Welch algorithm.

Fix an HMM $\mathcal{M}=(H,V,\theta,\psi,\pi)$. Pick an arbitrary hidden state $h^*\in H$ and an arbitrary visible state $v^*\in V$. Picking these states $h^* \in H$ and $v^*\in V$ is merely an artifice to break symmetry akin to selecting leaders, and our results hold independent of these choices. Also fix a length $L\in \mathbb{Z}_{>0}$ of observed sequence.

We first work out in full detail how the forward algorithm of the Baum-Welch algorithm may be translated into chemical reactions. Suppose a length $L$ sequence $(v_1,v_2,\dots,v_L)\in V^L$ of visible states is observed. Then recall that the forward algorithm uses the following recursion:
\begin{itemize}
\item Initialisation: $\alpha_{1h} = \pi_h\psi_{hv_1}$ for all $h\in H$,
\item Recursion: $\alpha_{l h}=\sum_{g\in H} \alpha_{l-1,g}\theta_{gh}\psi_{h v_l}$ for all $h\in H$ and $l=2,\ldots,L$.
\end{itemize}

Notice this implies
\begin{itemize}
\item $\alpha_{1h}\times \pi_{h^*}\psi_{h^*v_1} = \alpha_{1h^*}\times\pi_h\psi_{hv_1}$ for all $h\in H\setminus\{h^*\}$,
\item $\alpha_{l h}\times \left(\sum_{g\in H} \alpha_{l-1,g}\theta_{gh^*}\psi_{h^* v_l}\right)=\alpha_{l h^*}\times\left(\sum_{g\in H} \alpha_{l-1,g}\theta_{gh}\psi_{h v_l}\right)$ for all $h\in H\setminus\{h^*\}$ and $l=2,\ldots,L$.
\end{itemize}

This prompts the use of the following reactions for the initialization step:
\begin{equation*}
  \begin{gathered}
    \alpha_{1h}+\pi_{h^*}+\psi_{h^*v_1}\ce{->[]} \alpha_{1h^*}+\pi_{h^*}+\psi_{h^*v_1}\\
    \alpha_{1h^*}+\pi_{h}+\psi_{hv_1}\ce{->[]} \alpha_{1h}+\pi_{h}+\psi_{hv_1}
  \end{gathered}
\end{equation*}
for all $h\in H\setminus\{h^*\}$ and $w\in V$. By design $\alpha_{1h}\times \pi_{h^*}\psi_{h^*v_1} = \alpha_{1h^*}\times\pi_h\psi_{hv_1}$ is the balance equation for the pair of reactions corresponding to each $h\in H\setminus\{h^*\}$.

Similarly for the recursion step, we use the following reactions:
\begin{equation*}
  \begin{gathered}
    \alpha_{l h}+\alpha_{l-1,g}+ \theta_{gh^*}+\psi_{h^*v_l}\ce{->[]} \alpha_{l h^*}+\alpha_{l-1,g}+\theta_{gh^*}+\psi_{h^*v_l}\\
    \alpha_{l h^*}+\alpha_{l-1,g}+\theta_{gh}+\psi_{hv_l}\ce{->[]} \alpha_{l h}+\alpha_{l-1,g}+ \theta_{gh}+\psi_{hv_l}
  \end{gathered}
\end{equation*}
for all $g\in H$, $h\in H\setminus\{h^*\}, l=2,\ldots, L$ and $w\in V$. Again by design \[\alpha_{l h}\times \left(\sum_{g\in H} \alpha_{l-1,g}\theta_{gh^*}\psi_{h^* v_l}\right)=\alpha_{l h^*}\times\left(\sum_{g\in H} \alpha_{l-1,g}\theta_{gh}\psi_{h v_l}\right)\] is the balance equation for the set of reactions corresponding to each $(h,l)\in H\setminus\{h^*\}\times\{2,\ldots,L\}$.

The above reactions depend on the observed sequence $(v_1,v_2,\dots,v_L)\in V^L$ of visible state. And this is a problem because one would have to design different reaction networks for different observed sequences. To solve this problem we introduce the species $E_{lw}$ with $l=1,\dots,L$ and $w\in V$.
Now with the $E_{lw}$ species, we use the following reactions for the forward algorithm:
\begin{equation*}
  \begin{gathered}
    \alpha_{1h}+\pi_{h^*}+\psi_{h^*w}+E_{1w}\ce{->[]} \alpha_{1h^*}+\pi_{h^*}+\psi_{h^*w}+E_{1w}\\
    \alpha_{1h^*}+\pi_{h}+\psi_{hw}+E_{1w}\ce{->[]} \alpha_{1h}+\pi_{h}+\psi_{hw}+E_{1w}
  \end{gathered}
\end{equation*}
for all $h\in H\setminus\{h^*\}$ and $w\in V$.
\begin{equation*}  
  \begin{gathered}
    \alpha_{l h}+\alpha_{l-1,g}+ \theta_{gh^*}+\psi_{h^*w}+E_{lw}\ce{->[]} \alpha_{l h^*}+\alpha_{l-1,g}+\theta_{gh^*}+\psi_{h^*w}+E_{lw}\\
    \alpha_{l h^*}+\alpha_{l-1,g}+\theta_{gh}+\psi_{hw}+E_{lw}\ce{->[]} \alpha_{l h}+\alpha_{l-1,g}+ \theta_{gh}+\psi_{hw}+E_{lw}
  \end{gathered}
\end{equation*}
for  all $g\in H$, $h\in H\setminus\{h^*\}, l=2,\ldots, L$ and $w\in V$. The $E_{lw}$ species are to be initialized such that $E_{lw}=1$ iff $w=v_l$. So different observed sequences can now be processed by the same reaction network, by appropriately initializing the species $E_{lw}$.

The other parts of the Baum-Welch algorithm may be translated into chemical reactions using a similar logic. We call the resulting reaction network as the \textbf{Baum-Welch Reaction Network} $BW(\mathcal{M},h^*,v^*,L)$. It consists of four subnetworks corresponding to the four parts of the Baum-Welch algorithm, as shown in Table~\ref{tab:BWCRN}.

\begin{longtable}[!h]{|>{\centering\arraybackslash}m{0.35\linewidth}|>{\centering\arraybackslash}m{0.65\linewidth}|}
  \caption{\textbf{Baum-Welch Reaction Network}: The steps and reactions are for all $g,h\in H, w\in V$ and $l=1,\dots,L-1$. Notice there are some null reactions of the form $\sum a_iX_i\to \sum a_iX_i$. As these null reaction have no effect on the dynamics, they can be ignored.}
  \label{tab:BWCRN}\\
  \hline
  \textbf{Baum-Welch Algorithm} & \textbf{Baum-Welch Reaction Network}\\
  \hline
  \vbox{\[\alpha_{1h} = \pi_h\psi_{hv_1}\]
  }
  &\vbox{
    \begin{equation*}
      \begin{aligned}
        \alpha_{1h}+\pi_{h^*}+\psi_{h^*w}+E_{1w}&\ce{->[]} \alpha_{1h^*}+\pi_{h^*}+\psi_{h^*w}+E_{1w}\\
        \alpha_{1h^*}+\pi_{h}+\psi_{hw}+E_{1w}&\ce{->[]} \alpha_{1h}+\pi_{h}+\psi_{hw}+E_{1w}
      \end{aligned}
    \end{equation*}
  }\\
  \vbox{\[\alpha_{l+1, h}=\sum_{g\in H} \alpha_{lg}\theta_{gh}\psi_{h v_{l+1}}\]
  }
  &\vbox{
    \begin{equation*}
      \begin{gathered}
        \alpha_{l+1, h}+\alpha_{lg}+ \theta_{gh^*}+\psi_{h^*w}+E_{l+1,w}\ce{->[]}\\\alpha_{l+1, h^*}+\alpha_{lg}+\theta_{gh^*}+\psi_{h^*w}+E_{l+1,w}\\
        \alpha_{l+1, h^*}+\alpha_{lg}+\theta_{gh}+\psi_{hw}+E_{l+1,w}\ce{->[]}\\\alpha_{l+1, h}+\alpha_{lg}+ \theta_{gh}+\psi_{hw}+E_{l+1,w}
      \end{gathered}
    \end{equation*}
  }\\
  \hline
  \vbox{\[\beta_{L h}=1\]
  }
  & 
  \multirow{2}{*}{\vbox{
      \begin{equation*}
        \begin{gathered}
          \beta_{l h}+ \beta_{l+1,g}+ \theta_{h^*g} + \psi_{gw}+E_{l+1,w}\ce{->[]}\\\beta_{l h^*}+ \beta_{l+1,g}+\theta_{h^*g} + \psi_{gw}+E_{l+1,w}\\
          \beta_{l h^*} + \beta_{l+1,g}+ \theta_{hg} + \psi_{gw}+E_{l+1,w}\ce{->[]}\\\beta_{l h}+ \beta_{l+1,g}+ \theta_{hg} + \psi_{gw}+E_{l+1,w}
        \end{gathered}
      \end{equation*}
    }
  }
  \\
  \vbox{\[\beta_{l h}=\sum_{g\in H} \theta_{hg}\psi_{g v_{l+1}}\beta_{l+1,g}\]
  }
  & \\
  \hline
  \vbox{\[
  \gamma_{lh}=\frac{\alpha_{l h}\beta_{l h}}{\sum_{g\in H} \alpha_{l g}\beta_{l g}}
\]
}
  &\vbox{
    \begin{equation*}
      \begin{gathered}
        \gamma_{lh}+\alpha_{l h^*}+\beta_{l h^*}\ce{->[]}\gamma_{lh^*}+\alpha_{l h^*}+\beta_{l h^*}\\
        \gamma_{lh^*}+\alpha_{l h}+\beta_{l h}\ce{->[]}\gamma_{lh}+\alpha_{l h}+\beta_{l h}
      \end{gathered}
    \end{equation*}
  }\\
  \vbox{\[
  \xi_{lgh}=\frac{\alpha_{l g}\theta_{gh}\psi_{ hv_{l+1}}\beta_{l+1,h}}{\sum_{f\in H} \alpha_{l f}\beta_{l f}}
\]
}
  &\vbox{
    \begin{align*}
      \begin{split}	
        \xi_{lgh}+&\alpha_{l h^*}+\theta_{h^*h^*}+\beta_{l+1, h^*}+\psi_{h^* w}+E_{l+1,w}\ce{->[]}\
        \\\xi_{lh^*h^*}+&\alpha_{l h^*}+\theta_{h^*h^*}+\beta_{l+1, h^*}+\psi_{h^* w}+E_{l+1,w}\
        \\\xi_{lh^*h^*}+&\alpha_{l g}+\theta_{gh}+\beta_{l+1, g}+\psi_{h w}+E_{l+1,w}\ce{->[]}\
        \\\xi_{lgh}+&\alpha_{l g}+\theta_{gh}+\beta_{l+1, g}+\psi_{h w}+E_{l+1,w}
      \end{split}
    \end{align*}		
  }\\    
  \hline
  \vbox{\[\theta_{gh}\leftarrow\frac{\sum_{l=1}^{L-1}\xi_{lgh}}{\sum_{l=1}^{L-1}\sum_{f\in H}\xi_{lgf}}\]
  }
  &\vbox{
    \begin{equation*}
      \begin{gathered}
        \theta_{gh}+\xi_{lgh^*}\ce{->[]}\theta_{gh^*}+\xi_{lgh^*}\\
        \theta_{gh^*}+\xi_{lgh}\ce{->[]}\theta_{gh}+\xi_{lgh}
      \end{gathered}
    \end{equation*}
  }\\
  \vbox{\[\psi_{hw}\leftarrow\frac{\sum_{l=1}^L\gamma_{lh}\delta_{w,v_l}}{\sum_{l=1}^L\gamma_{lh}}\]
  }
  &\vbox{
    \begin{equation*}
      \begin{gathered}
        \psi_{hw}+\gamma_{lh}+E_{lv^*}\ce{->[]}\psi_{hv^*}+\gamma_{lh}+E_{lv^*}\\
        \psi_{hv^*}+\gamma_{lh}+E_{lw}\ce{->[]}\psi_{hw}+\gamma_{lh}+E_{lw}
      \end{gathered}
    \end{equation*}
  }\\
  \hline
\end{longtable}

The Baum-Welch reaction network described above has a special structure. Every reaction is a monomolecular transformation catalyzed by a set of species. The reverse transformation is also present, catalyzed by a different set of species to give the network a ``futile cycle''~\cite{tu2006metabolic} structure. In addition, each connected component in the undirected graph representation of the network has a topology with all transformations happening to and from a central species. This prompts the following definitions.
\begin{figure}[h!]
  \centering
  \hspace{4cm}\scalebox{0.8}{\subfloat{
  		\centering
  		{\def
        \svgscale{0.7}
        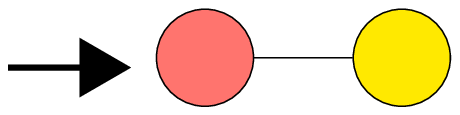 }}}\
  \\ \vspace{15pt}	\scalebox{0.8}{\subfloat{{\def
        \svgwidth{125pt}
        %% Creator: Inkscape inkscape 0.92.4, www.inkscape.org
%% PDF/EPS/PS + LaTeX output extension by Johan Engelen, 2010
%% Accompanies image file '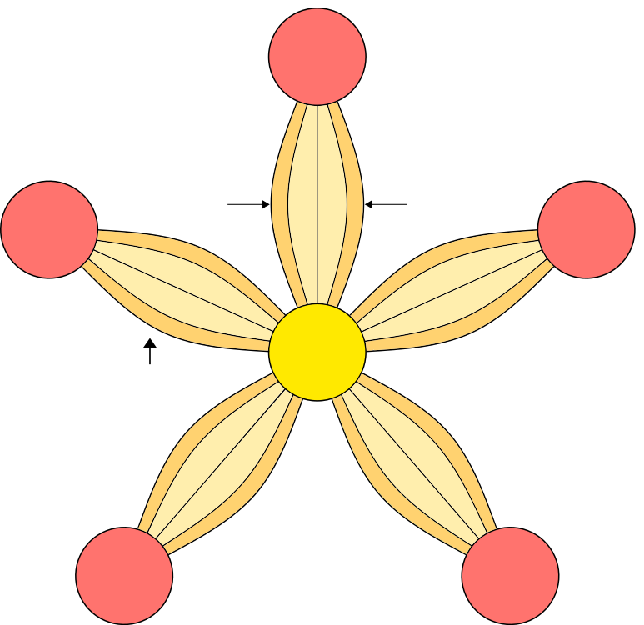' (pdf, eps, ps)
%%
%% To include the image in your LaTeX document, write
%%   \input{<filename>.pdf_tex}
%%  instead of
%%   \includegraphics{<filename>.pdf}
%% To scale the image, write
%%   \def\svgwidth{<desired width>}
%%   \input{<filename>.pdf_tex}
%%  instead of
%%   \includegraphics[width=<desired width>]{<filename>.pdf}
%%
%% Images with a different path to the parent latex file can
%% be accessed with the `import' package (which may need to be
%% installed) using
%%   \usepackage{import}
%% in the preamble, and then including the image with
%%   \import{<path to file>}{<filename>.pdf_tex}
%% Alternatively, one can specify
%%   \graphicspath{{<path to file>/}}
%% 
%% For more information, please see info/svg-inkscape on CTAN:
%%   http://tug.ctan.org/tex-archive/info/svg-inkscape
%%
\begingroup%
  \makeatletter%
  \providecommand\color[2][]{%
    \errmessage{(Inkscape) Color is used for the text in Inkscape, but the package 'color.sty' is not loaded}%
    \renewcommand\color[2][]{}%
  }%
  \providecommand\transparent[1]{%
    \errmessage{(Inkscape) Transparency is used (non-zero) for the text in Inkscape, but the package 'transparent.sty' is not loaded}%
    \renewcommand\transparent[1]{}%
  }%
  \providecommand\rotatebox[2]{#2}%
  \newcommand*\fsize{\dimexpr\f@size pt\relax}%
  \newcommand*\lineheight[1]{\fontsize{\fsize}{#1\fsize}\selectfont}%
  \ifx\svgwidth\undefined%
    \setlength{\unitlength}{183.00106259bp}%
    \ifx\svgscale\undefined%
      \relax%
    \else%
      \setlength{\unitlength}{\unitlength * \real{\svgscale}}%
    \fi%
  \else%
    \setlength{\unitlength}{\svgwidth}%
  \fi%
  \global\let\svgwidth\undefined%
  \global\let\svgscale\undefined%
  \makeatother%
  \begin{picture}(1,0.97170113)%
    \lineheight{1}%
    \setlength\tabcolsep{0pt}%
    \put(0,0){\includegraphics[width=\unitlength]{StarF.eps}}%
    \put(0.58565819,0.71328212){\color[rgb]{0,0,0}\makebox(0,0)[lt]{\lineheight{1.25}\smash{\begin{tabular}[t]{l}$|H|\times |V|$ \end{tabular}}}}%
    \put(0.43754411,0.41716127){\color[rgb]{0,0,0}\makebox(0,0)[lt]{\lineheight{1.25}\smash{\begin{tabular}[t]{l}$\alpha_{lh^*}$\end{tabular}}}}%
    \put(0.87713189,0.61132016){\color[rgb]{0,0,0}\makebox(0,0)[lt]{\lineheight{1.25}\smash{\begin{tabular}[t]{l}$\alpha_{lh}$\end{tabular}}}}%
    \put(0.74803028,0.40534883){\color[rgb]{0,0,0}\makebox(0,0)[lt]{\lineheight{1.25}\smash{\begin{tabular}[t]{l}$k^{\alpha}_{lh}$\end{tabular}}}}%
    \put(0.16342096,0.36373626){\color[rgb]{0,0,0}\makebox(0,0)[lt]{\lineheight{1.25}\smash{\begin{tabular}[t]{l}Petal\end{tabular}}}}%
    \put(0.40624896,0.06558023){\color[rgb]{0,0,0}\makebox(0,0)[lt]{\lineheight{1.25}\smash{\begin{tabular}[t]{l}Flower\end{tabular}}}}%
  \end{picture}%
\endgroup%
 }
      {\def
        \svgwidth{125pt}
        %% Creator: Inkscape inkscape 0.92.4, www.inkscape.org
%% PDF/EPS/PS + LaTeX output extension by Johan Engelen, 2010
%% Accompanies image file '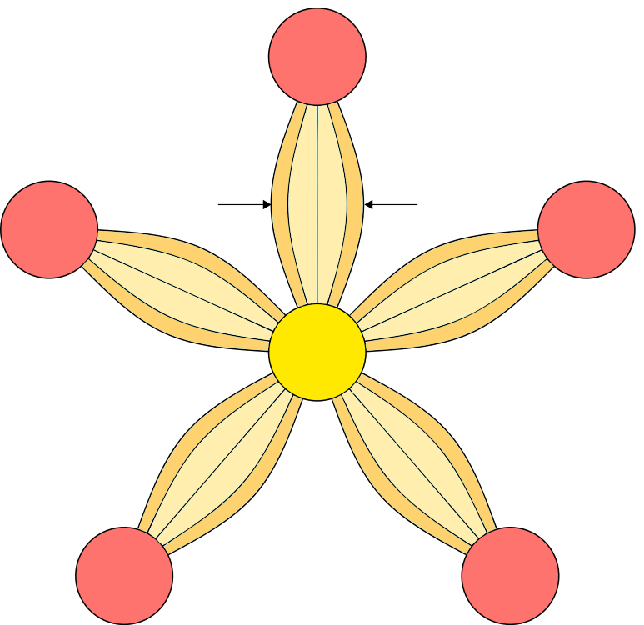' (pdf, eps, ps)
%%
%% To include the image in your LaTeX document, write
%%   \input{<filename>.pdf_tex}
%%  instead of
%%   \includegraphics{<filename>.pdf}
%% To scale the image, write
%%   \def\svgwidth{<desired width>}
%%   \input{<filename>.pdf_tex}
%%  instead of
%%   \includegraphics[width=<desired width>]{<filename>.pdf}
%%
%% Images with a different path to the parent latex file can
%% be accessed with the `import' package (which may need to be
%% installed) using
%%   \usepackage{import}
%% in the preamble, and then including the image with
%%   \import{<path to file>}{<filename>.pdf_tex}
%% Alternatively, one can specify
%%   \graphicspath{{<path to file>/}}
%% 
%% For more information, please see info/svg-inkscape on CTAN:
%%   http://tug.ctan.org/tex-archive/info/svg-inkscape
%%
\begingroup%
  \makeatletter%
  \providecommand\color[2][]{%
    \errmessage{(Inkscape) Color is used for the text in Inkscape, but the package 'color.sty' is not loaded}%
    \renewcommand\color[2][]{}%
  }%
  \providecommand\transparent[1]{%
    \errmessage{(Inkscape) Transparency is used (non-zero) for the text in Inkscape, but the package 'transparent.sty' is not loaded}%
    \renewcommand\transparent[1]{}%
  }%
  \providecommand\rotatebox[2]{#2}%
  \newcommand*\fsize{\dimexpr\f@size pt\relax}%
  \newcommand*\lineheight[1]{\fontsize{\fsize}{#1\fsize}\selectfont}%
  \ifx\svgwidth\undefined%
    \setlength{\unitlength}{183.00106259bp}%
    \ifx\svgscale\undefined%
      \relax%
    \else%
      \setlength{\unitlength}{\unitlength * \real{\svgscale}}%
    \fi%
  \else%
    \setlength{\unitlength}{\svgwidth}%
  \fi%
  \global\let\svgwidth\undefined%
  \global\let\svgscale\undefined%
  \makeatother%
  \begin{picture}(1,0.97170113)%
    \lineheight{1}%
    \setlength\tabcolsep{0pt}%
    \put(0,0){\includegraphics[width=\unitlength]{StarB.eps}}%
    \put(0.58565819,0.71328543){\color[rgb]{0,0,0}\makebox(0,0)[lt]{\lineheight{1.25}\smash{\begin{tabular}[t]{l}$|H|\times |V|$ \end{tabular}}}}%
    \put(0.44420472,0.41716127){\color[rgb]{0,0,0}\makebox(0,0)[lt]{\lineheight{1.25}\smash{\begin{tabular}[t]{l}$\beta_{lh^*}$\end{tabular}}}}%
    \put(0.87713756,0.61131638){\color[rgb]{0,0,0}\makebox(0,0)[lt]{\lineheight{1.25}\smash{\begin{tabular}[t]{l}$\beta_{lh}$\end{tabular}}}}%
    \put(0.74803028,0.40534883){\color[rgb]{0,0,0}\makebox(0,0)[lt]{\lineheight{1.25}\smash{\begin{tabular}[t]{l}$k^{\beta}_{lh}$\end{tabular}}}}%
  \end{picture}%
\endgroup%
 }
      {\def
        \svgwidth{125pt}
        %% Creator: Inkscape inkscape 0.92.4, www.inkscape.org
%% PDF/EPS/PS + LaTeX output extension by Johan Engelen, 2010
%% Accompanies image file '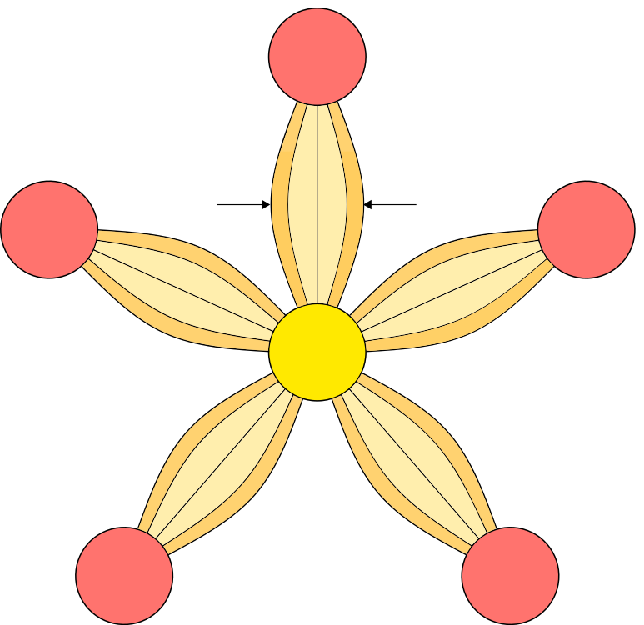' (pdf, eps, ps)
%%
%% To include the image in your LaTeX document, write
%%   \input{<filename>.pdf_tex}
%%  instead of
%%   \includegraphics{<filename>.pdf}
%% To scale the image, write
%%   \def\svgwidth{<desired width>}
%%   \input{<filename>.pdf_tex}
%%  instead of
%%   \includegraphics[width=<desired width>]{<filename>.pdf}
%%
%% Images with a different path to the parent latex file can
%% be accessed with the `import' package (which may need to be
%% installed) using
%%   \usepackage{import}
%% in the preamble, and then including the image with
%%   \import{<path to file>}{<filename>.pdf_tex}
%% Alternatively, one can specify
%%   \graphicspath{{<path to file>/}}
%% 
%% For more information, please see info/svg-inkscape on CTAN:
%%   http://tug.ctan.org/tex-archive/info/svg-inkscape
%%
\begingroup%
  \makeatletter%
  \providecommand\color[2][]{%
    \errmessage{(Inkscape) Color is used for the text in Inkscape, but the package 'color.sty' is not loaded}%
    \renewcommand\color[2][]{}%
  }%
  \providecommand\transparent[1]{%
    \errmessage{(Inkscape) Transparency is used (non-zero) for the text in Inkscape, but the package 'transparent.sty' is not loaded}%
    \renewcommand\transparent[1]{}%
  }%
  \providecommand\rotatebox[2]{#2}%
  \newcommand*\fsize{\dimexpr\f@size pt\relax}%
  \newcommand*\lineheight[1]{\fontsize{\fsize}{#1\fsize}\selectfont}%
  \ifx\svgwidth\undefined%
    \setlength{\unitlength}{183.00106259bp}%
    \ifx\svgscale\undefined%
      \relax%
    \else%
      \setlength{\unitlength}{\unitlength * \real{\svgscale}}%
    \fi%
  \else%
    \setlength{\unitlength}{\svgwidth}%
  \fi%
  \global\let\svgwidth\undefined%
  \global\let\svgscale\undefined%
  \makeatother%
  \begin{picture}(1,0.97170113)%
    \lineheight{1}%
    \setlength\tabcolsep{0pt}%
    \put(0,0){\includegraphics[width=\unitlength]{StarE1.eps}}%
    \put(0.58565098,0.71328543){\color[rgb]{0,0,0}\makebox(0,0)[lt]{\lineheight{1.25}\smash{\begin{tabular}[t]{l}$|V|$ \end{tabular}}}}%
    \put(0.4352143,0.41716127){\color[rgb]{0,0,0}\makebox(0,0)[lt]{\lineheight{1.25}\smash{\begin{tabular}[t]{l}$\xi_{l\scaleto{h^*h^*}{3pt}}$\end{tabular}}}}%
    \put(0.86791072,0.61132016){\color[rgb]{0,0,0}\makebox(0,0)[lt]{\lineheight{1.25}\smash{\begin{tabular}[t]{l}$\xi_{lgh}$\end{tabular}}}}%
    \put(0.74803643,0.40535237){\color[rgb]{0,0,0}\makebox(0,0)[lt]{\lineheight{1.25}\smash{\begin{tabular}[t]{l}$k^{\xi}_{lgh}$\end{tabular}}}}%
  \end{picture}%
\endgroup%
 }}
  }\\
  \scalebox{0.8}{\subfloat{
      {\def
        \svgwidth{125pt}
        %% Creator: Inkscape inkscape 0.92.4, www.inkscape.org
%% PDF/EPS/PS + LaTeX output extension by Johan Engelen, 2010
%% Accompanies image file '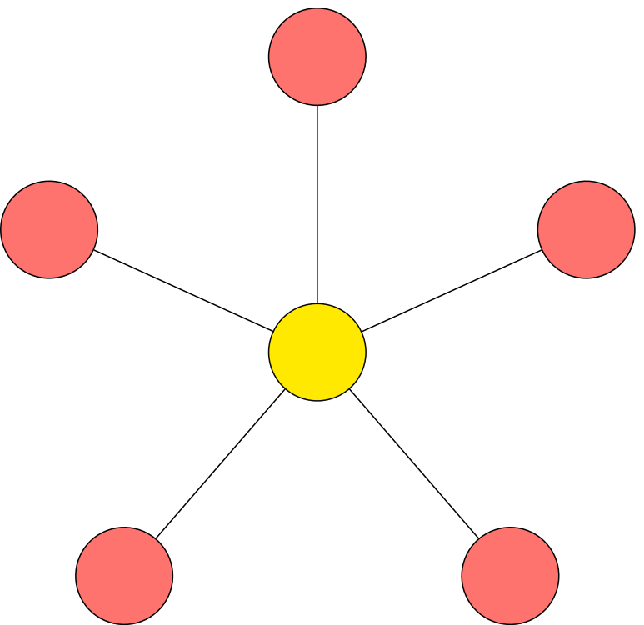' (pdf, eps, ps)
%%
%% To include the image in your LaTeX document, write
%%   \input{<filename>.pdf_tex}
%%  instead of
%%   \includegraphics{<filename>.pdf}
%% To scale the image, write
%%   \def\svgwidth{<desired width>}
%%   \input{<filename>.pdf_tex}
%%  instead of
%%   \includegraphics[width=<desired width>]{<filename>.pdf}
%%
%% Images with a different path to the parent latex file can
%% be accessed with the `import' package (which may need to be
%% installed) using
%%   \usepackage{import}
%% in the preamble, and then including the image with
%%   \import{<path to file>}{<filename>.pdf_tex}
%% Alternatively, one can specify
%%   \graphicspath{{<path to file>/}}
%% 
%% For more information, please see info/svg-inkscape on CTAN:
%%   http://tug.ctan.org/tex-archive/info/svg-inkscape
%%
\begingroup%
  \makeatletter%
  \providecommand\color[2][]{%
    \errmessage{(Inkscape) Color is used for the text in Inkscape, but the package 'color.sty' is not loaded}%
    \renewcommand\color[2][]{}%
  }%
  \providecommand\transparent[1]{%
    \errmessage{(Inkscape) Transparency is used (non-zero) for the text in Inkscape, but the package 'transparent.sty' is not loaded}%
    \renewcommand\transparent[1]{}%
  }%
  \providecommand\rotatebox[2]{#2}%
  \newcommand*\fsize{\dimexpr\f@size pt\relax}%
  \newcommand*\lineheight[1]{\fontsize{\fsize}{#1\fsize}\selectfont}%
  \ifx\svgwidth\undefined%
    \setlength{\unitlength}{183.00106259bp}%
    \ifx\svgscale\undefined%
      \relax%
    \else%
      \setlength{\unitlength}{\unitlength * \real{\svgscale}}%
    \fi%
  \else%
    \setlength{\unitlength}{\svgwidth}%
  \fi%
  \global\let\svgwidth\undefined%
  \global\let\svgscale\undefined%
  \makeatother%
  \begin{picture}(1,0.97170113)%
    \lineheight{1}%
    \setlength\tabcolsep{0pt}%
    \put(0,0){\includegraphics[width=\unitlength]{StarE2.eps}}%
    \put(0.45378739,0.41716127){\color[rgb]{0,0,0}\makebox(0,0)[lt]{\lineheight{1.25}\smash{\begin{tabular}[t]{l}$\gamma_{l\scaleto{h^*}{4pt}}$\end{tabular}}}}%
    \put(0.87713189,0.61132016){\color[rgb]{0,0,0}\makebox(0,0)[lt]{\lineheight{1.25}\smash{\begin{tabular}[t]{l}$\gamma_{lh}$\end{tabular}}}}%
    \put(0.74803028,0.45181816){\color[rgb]{0,0,0}\makebox(0,0)[lt]{\lineheight{1.25}\smash{\begin{tabular}[t]{l}$k^{\gamma}_{lh}$\end{tabular}}}}%
  \end{picture}%
\endgroup%
 }
      {\def
        \svgwidth{125pt}
        %% Creator: Inkscape inkscape 0.92.4, www.inkscape.org
%% PDF/EPS/PS + LaTeX output extension by Johan Engelen, 2010
%% Accompanies image file '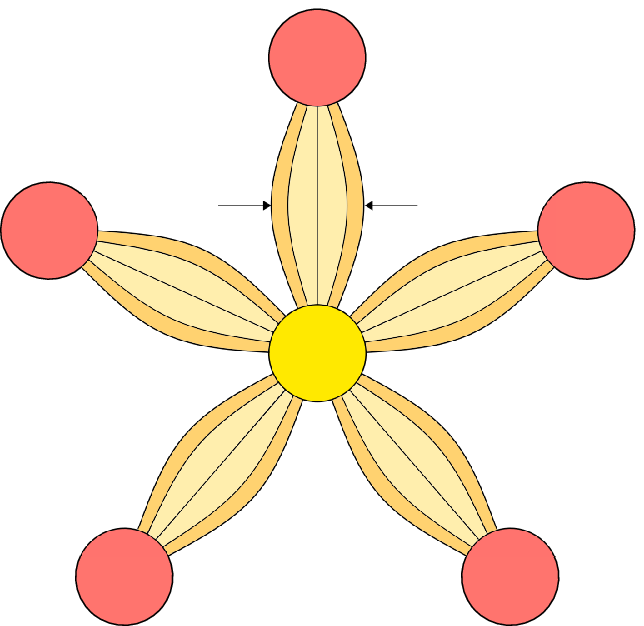' (pdf, eps, ps)
%%
%% To include the image in your LaTeX document, write
%%   \input{<filename>.pdf_tex}
%%  instead of
%%   \includegraphics{<filename>.pdf}
%% To scale the image, write
%%   \def\svgwidth{<desired width>}
%%   \input{<filename>.pdf_tex}
%%  instead of
%%   \includegraphics[width=<desired width>]{<filename>.pdf}
%%
%% Images with a different path to the parent latex file can
%% be accessed with the `import' package (which may need to be
%% installed) using
%%   \usepackage{import}
%% in the preamble, and then including the image with
%%   \import{<path to file>}{<filename>.pdf_tex}
%% Alternatively, one can specify
%%   \graphicspath{{<path to file>/}}
%% 
%% For more information, please see info/svg-inkscape on CTAN:
%%   http://tug.ctan.org/tex-archive/info/svg-inkscape
%%
\begingroup%
  \makeatletter%
  \providecommand\color[2][]{%
    \errmessage{(Inkscape) Color is used for the text in Inkscape, but the package 'color.sty' is not loaded}%
    \renewcommand\color[2][]{}%
  }%
  \providecommand\transparent[1]{%
    \errmessage{(Inkscape) Transparency is used (non-zero) for the text in Inkscape, but the package 'transparent.sty' is not loaded}%
    \renewcommand\transparent[1]{}%
  }%
  \providecommand\rotatebox[2]{#2}%
  \newcommand*\fsize{\dimexpr\f@size pt\relax}%
  \newcommand*\lineheight[1]{\fontsize{\fsize}{#1\fsize}\selectfont}%
  \ifx\svgwidth\undefined%
    \setlength{\unitlength}{183.00106259bp}%
    \ifx\svgscale\undefined%
      \relax%
    \else%
      \setlength{\unitlength}{\unitlength * \real{\svgscale}}%
    \fi%
  \else%
    \setlength{\unitlength}{\svgwidth}%
  \fi%
  \global\let\svgwidth\undefined%
  \global\let\svgscale\undefined%
  \makeatother%
  \begin{picture}(1,0.97170113)%
    \lineheight{1}%
    \setlength\tabcolsep{0pt}%
    \put(0,0){\includegraphics[width=\unitlength]{StarM1.eps}}%
    \put(0.58565819,0.71328543){\color[rgb]{0,0,0}\makebox(0,0)[lt]{\lineheight{1.25}\smash{\begin{tabular}[t]{l}$L$ \end{tabular}}}}%
    \put(0.43703264,0.41716127){\color[rgb]{0,0,0}\makebox(0,0)[lt]{\lineheight{1.25}\smash{\begin{tabular}[t]{l}$\theta_{gh^*}$\end{tabular}}}}%
    \put(0.87713756,0.61131638){\color[rgb]{0,0,0}\makebox(0,0)[lt]{\lineheight{1.25}\smash{\begin{tabular}[t]{l}$\theta_{gh}$\end{tabular}}}}%
    \put(0.74803028,0.40534883){\color[rgb]{0,0,0}\makebox(0,0)[lt]{\lineheight{1.25}\smash{\begin{tabular}[t]{l}$k^{\theta}_{gh}$\end{tabular}}}}%
  \end{picture}%
\endgroup%
 }
      {\def
        \svgwidth{125pt}
        %% Creator: Inkscape inkscape 0.92.4, www.inkscape.org
%% PDF/EPS/PS + LaTeX output extension by Johan Engelen, 2010
%% Accompanies image file '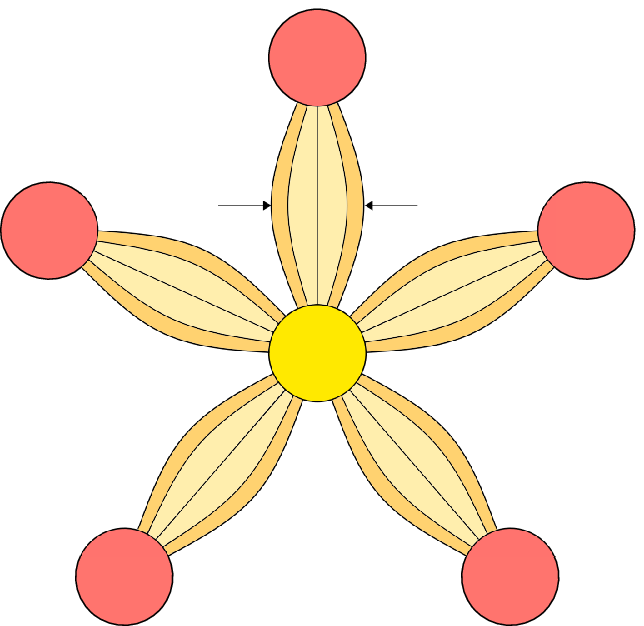' (pdf, eps, ps)
%%
%% To include the image in your LaTeX document, write
%%   \input{<filename>.pdf_tex}
%%  instead of
%%   \includegraphics{<filename>.pdf}
%% To scale the image, write
%%   \def\svgwidth{<desired width>}
%%   \input{<filename>.pdf_tex}
%%  instead of
%%   \includegraphics[width=<desired width>]{<filename>.pdf}
%%
%% Images with a different path to the parent latex file can
%% be accessed with the `import' package (which may need to be
%% installed) using
%%   \usepackage{import}
%% in the preamble, and then including the image with
%%   \import{<path to file>}{<filename>.pdf_tex}
%% Alternatively, one can specify
%%   \graphicspath{{<path to file>/}}
%% 
%% For more information, please see info/svg-inkscape on CTAN:
%%   http://tug.ctan.org/tex-archive/info/svg-inkscape
%%
\begingroup%
  \makeatletter%
  \providecommand\color[2][]{%
    \errmessage{(Inkscape) Color is used for the text in Inkscape, but the package 'color.sty' is not loaded}%
    \renewcommand\color[2][]{}%
  }%
  \providecommand\transparent[1]{%
    \errmessage{(Inkscape) Transparency is used (non-zero) for the text in Inkscape, but the package 'transparent.sty' is not loaded}%
    \renewcommand\transparent[1]{}%
  }%
  \providecommand\rotatebox[2]{#2}%
  \newcommand*\fsize{\dimexpr\f@size pt\relax}%
  \newcommand*\lineheight[1]{\fontsize{\fsize}{#1\fsize}\selectfont}%
  \ifx\svgwidth\undefined%
    \setlength{\unitlength}{183.00106259bp}%
    \ifx\svgscale\undefined%
      \relax%
    \else%
      \setlength{\unitlength}{\unitlength * \real{\svgscale}}%
    \fi%
  \else%
    \setlength{\unitlength}{\svgwidth}%
  \fi%
  \global\let\svgwidth\undefined%
  \global\let\svgscale\undefined%
  \makeatother%
  \begin{picture}(1,0.97170113)%
    \lineheight{1}%
    \setlength\tabcolsep{0pt}%
    \put(0,0){\includegraphics[width=\unitlength]{StarM2.eps}}%
    \put(0.58565819,0.71328543){\color[rgb]{0,0,0}\makebox(0,0)[lt]{\lineheight{1.25}\smash{\begin{tabular}[t]{l}$L$ \end{tabular}}}}%
    \put(0.42934825,0.41716127){\color[rgb]{0,0,0}\makebox(0,0)[lt]{\lineheight{1.25}\smash{\begin{tabular}[t]{l}$\psi_{hv^*}$\end{tabular}}}}%
    \put(0.863818,0.61131638){\color[rgb]{0,0,0}\makebox(0,0)[lt]{\lineheight{1.25}\smash{\begin{tabular}[t]{l}$\psi_{hw}$\end{tabular}}}}%
    \put(0.74803028,0.40534883){\color[rgb]{0,0,0}\makebox(0,0)[lt]{\lineheight{1.25}\smash{\begin{tabular}[t]{l}$k^{\psi}_{hw}$\end{tabular}}}}%
  \end{picture}%
\endgroup%
 }}}\\		
  \caption{\textbf{The Baum-Welch Reaction Network represented as an undirected graph.} (a) Each reaction is a unimolecular transformation driven by catalysts. The nodes represents the species undergoing transformation. The edge represents a pair of reactions which drive this transformation backwards and forwards in a futile cycle. Catalysts are omitted for clarity. (b) The network decomposes into a collection of disjoint \textbf{flowers}. Nodes represent species and edges represent pairs of reactions, species $\alpha,\gamma$ have $L$ flowers each, $\beta,\xi$ have $L-1$ flowers each, and species $\theta$ and $\psi$ have $|H|$ flowers each (not shown in figure). All reactions in the same petal have the same specific rate constant, so the dimension of the space of permissible rate constants is equal to the number of petals in the graph.}
  \label{fig::BaumWelchRN}
\end{figure}
\begin{definition}[Flowers, petals, gardens]
  A \textbf{graph} is a triple $(\text{Nodes}, \text{Edges}, \pi)$ where Nodes and Edges are finite sets and $\pi$ is a map from Edges to unordered pairs of elements from Nodes. A \textbf{flower} is a graph with a special node $n^*$ such that for every edge $e\in \text{Edges}$ we have $n^*\in\pi(e)$. A \textbf{garden} is a graph which is a union of disjoint flowers. A \textbf{petal} is a set of all edges $e$ which have the same $\pi(e)$, i.e. they are incident between the same pair of nodes.
\end{definition}

Figure~\ref{fig::BaumWelchRN} shows how the Baum-Welch reaction network can be represented as a garden graph, in which species are nodes and reactions are edges. 

A collection of specific rates is \textbf{permissible} if all reactions in a petal have the same rate. However, different petals may have different rates. We will denote the specific rate for a petal by superscripting the type of the species and subscripting its indices. For example, the specific rate  for reactions in the petal for $\alpha_{lh}$ would be denoted as $k^\alpha_{lh}$. The notation for the remaining rate constants can be read from Figure~ \ref{fig::BaumWelchRN}.
\begin{remark}\label{rmk:topology}
  The ``flower'' topology we have employed for the Baum-Welch reaction network is only one among several possibilities. The important thing is to achieve connectivity between different nodes that ought to be connected, ensuring that the ratio of the concentrations of the species denoted by adjacent nodes takes the value as prescribed by the Baum-Welch algorithm. Therefore, many other connection topologies can be imagined, for example a ring topology where each node is connected to two other nodes while maintaining the same connected components. In fact, there are obvious disadvantages to the star topology, with a single point of failure, whereas the ring topology appears more resilient. How network topology affects basins of attraction, rates of convergence, and emergence of spurious equilibria in our algorithm is a compelling question beyond the scope of this present work.
\end{remark}

The Baum-Welch reaction network with a permissible choice of rate constants $k$ will define a \textbf{Baum-Welch reaction system $(BW(\mathcal{M},h^*,v^*,L),k)$} whose deterministic mass action kinetics equations will perform a continuous time version of the Baum-Welch algorithm. We call this the \textbf{Chemical Baum-Welch Algorithm}, and describe it in Algorithm~\ref{SimHMMCRN}.\\
\\
\RestyleAlgo{boxruled}
\begin{algorithm}[H]
  \SetAlgoLined
  \KwIn{An HMM $\mathcal{M}=(H,V,\theta,\psi,\pi)$ and an observed sequence $v\in V^L$}
  \KwOut{Parameters $\hat\theta\in \mathbb{R}_{\geq0}^{|H|\times |H|},\hat\psi\in \mathbb{R}_{\geq0}^{|H|\times |V|}$}
  \textbf{Initialization of concentrations at t=0:}
  \begin{enumerate}
  \item For $w\in V$ and $l=1,\ldots,L$, initialize $E_{lw}(0)$ such that $E_{lw}(0)=	\begin{cases}
      1\text{ if }w=v_l\
      \\0\text{ otherwise}
    \end{cases}$
  \item For $g\in H$, initialize $\beta_{Lg}=\beta$
  \item For every other species, initialize its concentration arbitrarily in $\mathbb{R}_{>0}$.
  \end{enumerate}
  \textbf{Algorithm:}\\
  \hspace{0.5cm}Run the Baum-Welch reaction system with mass action kinetics until convergence.
  \caption{Chemical Baum-Welch Algorithm}
  \label{SimHMMCRN}
\end{algorithm}

\section{Analysis and Simulations}
\label{sec:analysis}

The Baum-Welch reaction network has number of species of each type as follows:
\[
  \begin{array}{|c||c|c|c|c|c|c|c|c|}
    \hline\text{Type} & \pi & \alpha & \beta & \theta & \psi & E & \gamma & \xi \
    \\	\hline\text{Number} & \,|H|\, & \,|H|L \,& \,|H|L \,& \,|H|^2 \,&\, |H||V|\, 
                                                             & \,L|V|\, & \,|H|L\, & \,|H|^2(L-1)\
    \\\hline
  \end{array}
\]
The total number of species is $|H| + 3|H|L + |H|^2 + |H||V| + L|V| + |H|^2(L-1)$ which is $O(L|H|^2 +|H||V| + L|V|)$.
The number of reactions (ignoring the null reactions of the form $\sum a_iX_i\to \sum a_iX_i$) in each part is:
\[
  \begin{array}{|c|c|c|c|}
    \hline\text{Forward} &\text{Backward} &\text{Expectation} &\text{Maximization}\
    \\\hline 2(|H|-1)V +  & (2|H|(|H|-1)|V|)
                                          & 2L(|H|-1) +  &2|H|(|H|-1)(L-1) \
    \\ 2|H|(|H|-1)(L-1)|V| & (L-1)  &2(L-1)(|H|^2-1) &+ 2|H|L(|V|-1)\
    \\\hline
  \end{array}
\]
so that the total number of reactions is $O(|H|^2L|V|)$.

The first theorem shows that the Chemical Baum-Welch Algorithm recovers all of the Baum-Welch equilibria.

\begin{theorem}\label{thm:bw}
  Every fixed point of the Baum-Welch algorithm for an HMM $\mathcal{M}=(H,V,\theta,\psi,\pi)$ is a fixed point for the corresponding Chemical Baum-Welch Algorithm with permissible rates $k$.
\end{theorem}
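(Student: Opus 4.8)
The plan is to exhibit an explicit steady state of the mass-action ODEs whose $\theta$- and $\psi$-coordinates equal the given Baum-Welch fixed point $(\hat\theta,\hat\psi)$, and then to verify $\dot x_i=0$ coordinate by coordinate. The structural fact that makes this tractable is the garden topology of Figure~\ref{fig::BaumWelchRN}: every species is \emph{transformed} (produced and consumed) in exactly one flower, and wherever it appears in any other flower it sits on both sides of the arrow as a catalyst, so its concentration is unchanged there. Hence the condition $\dot x_i=0$ decouples into independent per-flower balance conditions, and I can treat the six flower families (forward $\alpha$, backward $\beta$, and the $\gamma,\xi,\theta,\psi$ flowers) separately. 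I would first fix the concentrations: set the parameter species to the fixed-point data, $[\pi_h]=\pi_h$, $[\theta_{gh}]=\hat\theta_{gh}$, $[\psi_{hw}]=\hat\psi_{hw}$; set the clamp species to $E_{lw}=1$ if $w=v_l$ and $0$ otherwise; fix the boundary species $\beta_{Lh}$ to a common positive constant (these, like $\pi_h$ and $E_{lw}$, occur only as catalysts and so keep their initialized values). For the computed species I would run one classical E-step from $(\hat\theta,\hat\psi,\pi)$ on $v$ and set $[\alpha_{lh}],[\beta_{lh}],[\gamma_{lh}],[\xi_{lgh}]$ equal to the resulting Baum-Welch values.

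Next I would verify the E-step flower balances. Collecting the mass-action contributions of one petal and invoking permissibility---every edge of a petal shares a single rate $k$---the net flux into a petal-tip species factors; for instance, in the level-$(l+1)$ forward flower, after the clamp $E_{l+1,w}$ collapses the $w$-sum onto $w=v_{l+1}$,
\[
\dot{[\alpha_{l+1,h}]}=k^{\alpha}_{l+1,h}\Big(-[\alpha_{l+1,h}]\sum_{g}[\alpha_{lg}][\theta_{gh^*}][\psi_{h^*v_{l+1}}]+[\alpha_{l+1,h^*}]\sum_{g}[\alpha_{lg}][\theta_{gh}][\psi_{hv_{l+1}}]\Big).
\]
Setting this to zero is exactly the ratio identity $\tfrac{[\alpha_{l+1,h}]}{[\alpha_{l+1,h^*}]}=\tfrac{\sum_g[\alpha_{lg}][\theta_{gh}][\psi_{hv_{l+1}}]}{\sum_g[\alpha_{lg}][\theta_{gh^*}][\psi_{h^*v_{l+1}}]}$, i.e. the cross-multiplied forward recursion already recorded in the ``Notice this implies'' identities; the embedded $\alpha$ values satisfy it by construction, and the common level-$l$ scale cancels. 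The backward flower and the $\gamma$- and $\xi$-flowers are handled identically: each balance is simply the cross-multiplication of the corresponding backward/E-step formula and holds for the embedded values (including the initialization flower for $\alpha_{1\cdot}$ against $\pi,\psi$).

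The crux is the two M-step flowers, and this is the only place the hypothesis is used. The same flux computation yields the balance conditions $\tfrac{[\theta_{gh}]}{[\theta_{gh^*}]}=\tfrac{\sum_l[\xi_{lgh}]}{\sum_l[\xi_{lgh^*}]}$ and $\tfrac{[\psi_{hw}]}{[\psi_{hv^*}]}=\tfrac{\sum_l[\gamma_{lh}]\delta_{w,v_l}}{\sum_l[\gamma_{lh}]\delta_{v^*,v_l}}$, which are precisely the cross-multiplied M-step updates. Because $(\hat\theta,\hat\psi)$ is a Baum-Welch fixed point, the M-step reproduces $\hat\theta,\hat\psi$ from exactly the $\xi,\gamma$ computed above, so both conditions hold with equality. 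In other words, the fixed-point hypothesis is exactly what makes the parameter species play their dual role coherently: catalytic input to the E-step flowers and transformed output of the M-step flowers.

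The subtlety I expect to be the real work is the aggregation over $l$ in the M step. Since $\sum_l[\xi_{lgh}]$ and $\sum_l[\gamma_{lh}]$ are plain unweighted sums, I must embed $\xi$ and $\gamma$ with a scale that is \emph{uniform} across the $l$-indexed flowers; otherwise the per-flower scaling freedom of Remarks~\ref{rmk:scale} and~\ref{rmk:scale2} would reweight these sums and destroy the M-step ratios. Choosing the exact Baum-Welch values achieves this, and it is consistent precisely because each flower's dynamics constrains only within-flower ratios while leaving the total flower mass---a conserved quantity of the monomolecular futile cycles---free. The remaining obstacle is bookkeeping: confirming the decoupling claim species-by-species and checking the boundary flowers so that every $\dot x_i$ indeed vanishes at the constructed point.
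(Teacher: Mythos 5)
Your proposal is correct and follows essentially the same route as the paper's own proof: embed the Baum-Welch fixed-point data directly as concentrations and verify, species by species, that each mass-action derivative is a rate constant times the difference of two products that the cross-multiplied Baum-Welch recursions (and, for the $\theta,\psi$ flowers, the M-step fixed-point property) force to be equal, with the central species' derivative vanishing by conservation of total flower mass. The extra structural commentary on per-flower decoupling and uniform scaling is sound but does not change the argument, which the paper carries out as the same coordinate-wise verification.
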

See \ref{pf:bw} for proof.

We say a vector of real numbers is \textbf{positive} if all its components are strictly greater than $0$. The next theorem shows that positive equilibria of the Chemical Baum-Welch Algorithm are also fixed points of the Baum-Welch algorithm.

\begin{theorem}\label{thm:bwcrn}
  Every \text{\textbf{positive}} fixed point for the Chemical Baum-Welch Algorithm on a {Baum-Welch Reaction system} $(BW(\mathcal{M},h^*,v^*,L),k)$ with permissible rate $k$ is a fixed point for the Baum-Welch algorithm for the HMM $\mathcal{M}=(H,V,\theta,\psi,\pi)$.
\end{theorem}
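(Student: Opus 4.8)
The plan is to exploit the \textbf{garden} topology of the network to reduce the fixed-point condition to one algebraic balance equation per petal, and then to recognize each such balance equation as the scale-invariant (ratio) form of the corresponding Baum-Welch update, the converse direction to Theorem~\ref{thm:bw}.

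First I would observe that, at any fixed point, every non-central species of a flower is produced and consumed only by the single petal joining it to that flower's center $n^*$ (everywhere else it occurs only as a catalyst, and catalysts do not change their own concentration). Hence setting $\dot x=0$ for such a species forces the net flux through its petal to vanish. Writing this out for the $\alpha$-flowers, using the permissible-rate assumption to pull out the common petal rate constant and $E_{lw}=\delta_{w,v_l}$ to select the observed symbol, the vanishing-flux condition for $\alpha_{lh}$ with $h\neq h^*$ becomes exactly the balance equation $\hat\alpha_{lh}\bigl(\sum_{g}\hat\alpha_{l-1,g}\theta_{gh^*}\psi_{h^*v_l}\bigr)=\hat\alpha_{lh^*}\bigl(\sum_{g}\hat\alpha_{l-1,g}\theta_{gh}\psi_{hv_l}\bigr)$ that the petal was engineered to enforce, and positivity lets me divide to read off the ratio $\hat\alpha_{lh}/\hat\alpha_{lh^*}$. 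Carrying out the same computation for the $\beta$-, $\gamma$-, $\xi$-, $\theta$- and $\psi$-flowers yields the projective (ratio) forms of the backward recursion, the two E-step formulas, and the two M-step updates, all holding simultaneously at the fixed point.

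Next I would convert these ratio identities into genuine Baum-Welch relations. Let $\alpha^{BW},\beta^{BW}$ denote the exact forward and backward vectors computed from the fixed point's parameters $(\hat\theta,\hat\psi)$. By induction on $l$ --- base case from the initialization petals, inductive step from the recursion balance, with the magnitude of $\hat\alpha_{l-1}$ cancelling because both sides are linear in it --- I would show that $\hat\alpha_l$ is a positive scalar multiple of $\alpha^{BW}_l$, and symmetrically that $\hat\beta_l$ is a positive scalar multiple of $\beta^{BW}_l$. Feeding this into the E-step ratio identities shows that $\hat\gamma$ and $\hat\xi$ agree with the true Baum-Welch posteriors up to exactly the rescalings to which the subsequent step is insensitive (Remark~\ref{rmk:scale} for the E-step, Remark~\ref{rmk:scale2} for the M-step). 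The M-step ratio identities then assert that one Baum-Welch iteration reproduces $(\hat\theta,\hat\psi)$, i.e.\ that $(\hat\theta,\hat\psi)$ is a Baum-Welch fixed point.

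The main obstacle is the careful bookkeeping of these scalar factors. The free central-node concentration of each flower makes $\hat\alpha_l,\hat\beta_l,\hat\gamma$ and $\hat\xi$ agree with their Baum-Welch counterparts only up to \emph{level-dependent} positive constants, whereas both M-step updates sum over the level index $l$. The delicate point is therefore to verify that every surviving scalar is of the particular shape that the scale symmetries of Remarks~\ref{rmk:scale} and~\ref{rmk:scale2} absorb --- that is, that the network's per-flower scaling freedom maps precisely onto Baum-Welch's scale invariance and introduces no spurious reweighting across positions --- so that the projected parameters $(\hat\theta,\hat\psi)$ remain exactly at a Baum-Welch fixed point. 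Positivity is used throughout: it is what allows passing from each product-form balance equation to the corresponding ratio and guarantees that all denominators are nonzero.
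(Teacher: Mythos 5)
Your proposal follows essentially the same route as the paper's proof: set each time derivative to zero, use positivity to pass from the petal balance equations to ratio identities for $\alpha,\beta,\gamma,\xi,\theta,\psi$, and conclude via the scale symmetries of Remarks~\ref{rmk:scale} and~\ref{rmk:scale2} that the point satisfies the (lax) Baum-Welch updates. The ``delicate point'' you flag --- that the per-flower conserved totals leave \emph{level-dependent} positive scalars on $\hat\gamma$ and $\hat\xi$, which then enter the $l$-summed M-step expressions --- is a genuine subtlety, but the paper's own proof does not treat it any more carefully than you do: it simply records the scaled identities and closes with a one-line appeal to Remark~\ref{rmk:scale}.
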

See \ref{pf:bwcrn} for proof.

The Baum-Welch algorithm is an iterative algorithm. The E step and the M step are run iteratively. In contrast, the Chemical Baum-Welch algorithm is a generalized EM algorithm~\cite{mclachlan2007algorithm} where all the reactions are run at the same time in a single-pot reaction.

The next theorem shows that the $E$ step consisting of the forward network, the backward network, and the $E$ network, converges exponentially fast to the correct equilibrium if the $\theta$ and $\psi$ species are held fixed at a {\em positive} point. 

\begin{theorem}\label{th:EMexpconv1}
  For the {Baum-Welch Reaction System} $(BW(\mathcal{M},h^*,v^*,L),k)$ with permissible rates $k$, if the concentrations of $\theta$ and $\psi$ species are held fixed at a {\em positive} point then the Forward, Backward and Expection step reaction systems on $\alpha,\beta,\gamma$ and $\psi$ species converge to equilibrium exponentially fast.
\end{theorem}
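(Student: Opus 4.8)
The plan is to exploit the \textbf{triangular cascade structure} that the $E$-step network acquires once $\theta$ and $\psi$ are frozen at a positive point. First I would observe that, because $\theta,\psi,\pi$ and the indicator species $E_{lw}$ all appear only as catalysts and are held constant, each flower's mass-action dynamics becomes \emph{linear} in its own species, with coefficients depending only on the species of the adjacent level. Concretely, writing $\alpha_l=(\alpha_{lh})_{h\in H}$ and using the permissible-rate identity $k^{\alpha+}_{lh}=k^{\alpha-}_{lh}=k^\alpha_{lh}$ together with $E_{lw}=\delta_{w,v_l}$, the level-$l$ forward flower obeys
\[
\dot\alpha_{lh}=k^\alpha_{lh}\Big(\alpha_{lh^*}\,\psi_{hv_l}\sum_{g\in H}\alpha_{l-1,g}\theta_{gh}-\alpha_{lh}\,\psi_{h^*v_l}\sum_{g\in H}\alpha_{l-1,g}\theta_{gh^*}\Big)
\]
for $h\neq h^*$, with $\dot\alpha_{lh^*}=-\sum_{h\neq h^*}\dot\alpha_{lh}$. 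Thus the whole $E$-step network is a cascade: $\alpha_1$ is autonomous, $\alpha_l$ is driven by $\alpha_{l-1}$; symmetrically $\beta_L$ is constant and $\beta_l$ is driven by $\beta_{l+1}$; and finally the $\gamma$ and $\xi$ flowers are driven by $(\alpha,\beta)$ together with the frozen $\theta,\psi,E$. I would establish exponential convergence stage by stage along this cascade.

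Second, I would isolate a single flower with \emph{fixed positive} catalysts and show it converges exponentially. Abstracting the display above (and likewise the $\gamma$, $\xi$ equations, where $c_h=\alpha_{lh}\beta_{lh}$ and so on), such a flower has the form $\dot x_h=k_h\big(c_hx_{h^*}-c_{h^*}x_h\big)$ for the petals $h\neq h^*$ with constants $c_h,c_{h^*},k_h>0$, and $\dot x_{h^*}=-\sum_{h\neq h^*}\dot x_h$. This is a linear system $\dot x=Ax$ whose matrix $A$ is Metzler with zero column sums, i.e.\ the transpose of the generator of a continuous-time Markov chain on the star graph with centre $h^*$. Since the star is connected the chain is irreducible, so by Perron--Frobenius $A$ has a simple eigenvalue $0$ (spanning the equilibrium direction) and all remaining eigenvalues with strictly negative real part. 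As the dynamics conserve $\sum_h x_h$ and $A$ maps into $\{\mathbf 1^\top v=0\}$, on each level set of the conserved total the trajectory converges exponentially to the unique fixed point, whose ratios $x_h/x_{h^*}=c_h/c_{h^*}$ reproduce exactly the Baum--Welch relation for that flower.

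Third, I would assemble the stages into a global exponential estimate. The base stages ($\alpha_1$ and $\beta_L$) are autonomous linear flowers, hence converge exponentially by the previous step. For an inductive stage, the catalysts supplied by the preceding stage converge exponentially to a positive limit; because mass-action dynamics keep each conservation class in the positive orthant, the catalyst vector stays in a compact positive set, which yields a \emph{uniform} lower bound on the $c_h$'s and hence a uniform spectral gap for the frozen-catalyst matrices. The stage is then a linear time-varying system $\dot x=A(t)x$ whose coefficient matrix converges exponentially to a matrix that is Hurwitz on the quotient by the conserved direction; a standard perturbation argument---an exponentially decaying perturbation of an exponentially stable linear system is again exponentially convergent---gives exponential convergence of that stage. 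Iterating through the finitely many levels of the forward chain, the backward chain, and then the $\gamma,\xi$ flowers, and taking the worst of the finitely many decay rates, yields a single exponential rate for the entire $E$-step system.

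The main obstacle I anticipate is the cascade/time-varying step rather than the single-flower computation: I must verify that positivity and the conservation laws genuinely confine every stage's catalysts to a compact subset of the open positive orthant, so the spectral gaps do not degenerate, and then control the accumulation of the exponentially decaying coupling terms across the $O(L)$ stages without losing the exponential rate. The single-flower spectral-gap claim and its identification with an irreducible Markov generator are routine; the care lies in the uniform estimates that let the stagewise rates compose into one global exponential rate.
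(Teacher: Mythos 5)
Your proposal follows essentially the same route as the paper: freeze the catalysts, observe the layer-by-layer cascade ($\alpha_1$ autonomous, $\alpha_l$ driven by $\alpha_{l-1}$, likewise for $\beta$, then $\gamma,\xi$ driven by products of $\alpha,\beta$), prove exponential convergence for a single flower via the spectral gap of its zero-column-sum (Laplacian/Markov-generator) matrix on the quotient by the conservation law, and propagate through the stages by the fact that an exponentially decaying perturbation of an exponentially stable linear system still converges exponentially---which is precisely the content of the paper's Lemma~\ref{lem:expconv} and Lemma~\ref{lem:kernel}. The only ingredient the paper makes explicit that you leave implicit is the small estimate showing that products such as $\alpha_{lh}(t)\beta_{lh}(t)$ of exponentially convergent quantities are themselves exponentially convergent, which is needed before the $\gamma$ and $\xi$ stages.
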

See \ref{pf:EMexpconv1} for proof.

The next theorem shows that if the $\alpha,\beta,\gamma,\xi$ species are held fixed at a {\em positive} point, then the $M$ step consisting of reactions modifying the $\theta$ and $\psi$ species converges exponentially fast to the correct equilibrium.
\begin{theorem}\label{th:EMexpconv2}
  For the {Baum-Welch Reaction System} $(BW(\mathcal{M},h^*,v^*,L),k)$ with permissible rates $k$, if the concentrations of $\alpha,\beta,\gamma$ and $\xi$ species are held fixed at a {\em positive} point then the Maximization step reaction system on $\theta$ and $\psi$ converges to equilibrium exponentially fast.
\end{theorem}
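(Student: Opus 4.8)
\section*{Proof proposal for Theorem~\ref{th:EMexpconv2}}

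The plan is to exploit that, once the species $\alpha,\beta,\gamma,\xi$ (and the frozen indicator species $E$) are held at a positive point, the $M$-step reactions of Table~\ref{tab:BWCRN} become a \emph{linear} (monomolecular) mass-action system in the remaining dynamic species $\theta$ and $\psi$. Writing $\Xi_{gh}=\sum_{l=1}^{L-1}\xi_{lgh}>0$ and $G_{hw}=\sum_{l=1}^{L}\gamma_{lh}E_{lw}\ge0$, which are constants under the hypothesis, the mass-action ODEs coming from the two $\theta$-reactions and the two $\psi$-reactions collapse to
\[
\dot\theta_{gh}=k^{\theta}_{gh}\bigl(\theta_{gh^*}\Xi_{gh}-\theta_{gh}\Xi_{gh^*}\bigr),\qquad
\dot\psi_{hw}=k^{\psi}_{hw}\bigl(\psi_{hv^*}G_{hw}-\psi_{hw}G_{hv^*}\bigr),
\]
for $h\neq h^*$ and $w\neq v^*$, with the center coordinates $\theta_{gh^*},\psi_{hv^*}$ pinned by the conserved row sums. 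First I would note that this system completely \textbf{decouples}: the $\theta$ block and the $\psi$ block do not interact, and within each block the reactions for a fixed row $g$ (resp.\ $h$) touch only the species of that row. Hence the whole $M$-step dynamics is a direct product of $2|H|$ independent linear subsystems, each living on a single ``flower'' of Figure~\ref{fig::BaumWelchRN}.

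Next I would analyze one flower. For fixed $g$, write the row dynamics as $\dot{(\theta_{g\cdot})}=M(\theta_{g\cdot})$. The matrix $M$ is a Metzler matrix (nonnegative off-diagonals) with zero column sums, so $\mathbf 1^{T}M=0$ and the total $\sum_{h}\theta_{gh}$ is conserved; equivalently $M^{T}$ generates a continuous-time Markov chain on $H$ with center-to-leaf and leaf-to-center rates $k^{\theta}_{gh}\Xi_{gh}$ and $k^{\theta}_{gh}\Xi_{gh^*}$. Positivity of $\xi$ makes every such rate strictly positive, so the chain is irreducible; moreover these rates satisfy detailed balance with respect to $\pi_h\propto\Xi_{gh}$, because $\Xi_{gh}\bigl(k^{\theta}_{gh}\Xi_{gh^*}\bigr)=\Xi_{gh^*}\bigl(k^{\theta}_{gh}\Xi_{gh}\bigr)$. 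Reversibility makes $M$ self-adjoint for the inner product weighted by $1/\Xi_{gh}$, hence diagonalizable with real eigenvalues and a simple zero eigenvalue, the rest strictly negative; this gives exponential convergence onto $\ker M$ at a rate equal to the spectral gap. The equilibrium in $\ker M$ satisfies $\theta_{gh}/\theta_{gh^*}=\Xi_{gh}/\Xi_{gh^*}$ (the rate $k^{\theta}_{gh}$ cancels), which together with the conserved unit total reproduces the Baum-Welch update $\theta_{gh}=\Xi_{gh}/\sum_{f}\Xi_{gf}$. The identical argument with $G_{hw}$ in place of $\Xi_{gh}$ handles the $\psi$ flowers, and a finite product of exponentially contracting linear flows is itself exponentially contracting, which proves the theorem.

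The step I expect to require the most care is the degenerate case for $\psi$: if some visible symbol $w$ does \emph{not} occur in the observed sequence then $E_{lw}=0$ for all $l$, so $G_{hw}=0$ and the node for $\psi_{hw}$ becomes an outflow-only (transient) state, making its flower fail to be strongly connected and breaking the clean reversible picture. I would discharge this by observing that such coordinates decouple as $\dot\psi_{hw}=-k^{\psi}_{hw}\psi_{hw}G_{hv^*}$ once the center relaxes, so they drain to $0$ exponentially, and that the limiting value $\psi_{hw}=0$ again matches the Baum-Welch update (numerator $G_{hw}=0$); the surviving recurrent sub-flower is strongly connected and reversible, so the spectral-gap argument applies to it and exponential convergence persists. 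If a fully self-contained spectral bound is preferred over invoking Markov-chain ergodicity, the same rate can be obtained from a strict Lyapunov function transverse to $\ker M$ (for instance the relative entropy of the associated chain, or a suitably weighted quadratic in the deviation from equilibrium), whose derivative along trajectories is negative definite away from the conserved fiber.
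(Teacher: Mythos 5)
Your proof is correct, but it takes a genuinely different route from the paper's. The paper disposes of this theorem in one line by appealing to the same Lemma~\ref{lem:expconv} used for Theorem~\ref{th:EMexpconv1}: with $\alpha,\beta,\gamma,\xi,E$ frozen, each $M$-step reaction is a monomolecular transformation with a \emph{constant} effective rate, so the lemma (whose real content is the reduction by conservation laws via Lemma~\ref{lem:kernel} plus the fact that the Laplacian of a strongly connected reaction graph has all nonzero eigenvalues in the open left half-plane) applies immediately, with the time-varying-rate machinery degenerating to the trivial case $k(t)\equiv k$. You instead decompose the frozen $M$-step into independent flowers and run a direct spectral argument on each: the star topology forces detailed balance with respect to $\pi_h\propto\Xi_{gh}$, hence self-adjointness in the $1/\Xi_{gh}$-weighted inner product, real eigenvalues, a simple zero eigenvalue, and a spectral gap. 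What your approach buys is self-containedness (no need for the perturbation estimate in Lemma~\ref{lem:expconv}), an explicit identification of the limit as the Baum--Welch update, and --- notably --- an honest treatment of the degenerate case where some symbol $w$ never occurs, so $G_{hw}=0$ and the flower is not strongly connected; the paper's Lemma~\ref{lem:expconv} formally assumes strong connectivity and its one-line proof silently glosses over this. What the paper's approach buys is uniformity: the identical lemma powers both Theorems~\ref{th:EMexpconv1} and~\ref{th:EMexpconv2}, including the layered induction where rates genuinely vary in time. One small residual gap on your side: you treat the case $G_{hw}=0$ for a leaf $w$, but not the symmetric case $G_{hv^*}=0$ (the chosen anchor symbol $v^*$ never occurs), where the \emph{center} of the $\psi_{h\cdot}$ flower is the transient node; convergence is still exponential there (the center drains linearly and each leaf integrates an exponentially decaying source), but the limit then depends on initial conditions and need not match the Baum--Welch update, so your closing claim about recovering $\psi_{hw}=G_{hw}/\sum_v G_{hv}$ should be restricted accordingly.
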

See \ref{pf:EMexpconv2} for proof.
\
\\The following examples demonstrate the behavior of the Chemical Baum-Welch Algorithm.
\begin{example}\label{Ex::1}
  Consider an HMM  $(H,V, \theta,\psi,\pi)$ with two hidden states $H=\{H_1,H_2\}$ and two emitted symbols $V=\{V_1,V_2\}$ where the starting probability is $\pi=(0.6,0.4)$, initial transition probability is $\theta=\begin{bmatrix}0.6 &0.4\\0.3 &0.7\end{bmatrix}$, and initial emission probability is $\psi=\begin{bmatrix}0.5 & 0.5\\0.5 & 0.5 \end{bmatrix}$. Suppose we wish to learn $(\theta,\psi)$ for the following observed sequence: \\ $(V_1,V_1,V_1,V_2,V_1,V_1,V_2,V_2,V_2,V_1,V_2,V_2,V_1,V_1,V_1,V_1,V_2,V_2,V_2,V_1,V_2,V_1,V_1,V_2,V_2)$.
  We initialize the corresponding reaction system by setting species $E_{l,v_l}=1$ and $E_{l,w}=0$ for $w\neq v_l$, and run the dynamics according to deterministic mass-action kinetics. Initial conditions of all species that are not mentioned are chosen to be nonzero, but otherwise at random.
  \begin{figure}[h]
    \centering
    \subfloat[Reaction Network Dynamics]{{\includegraphics[width=6.5cm]{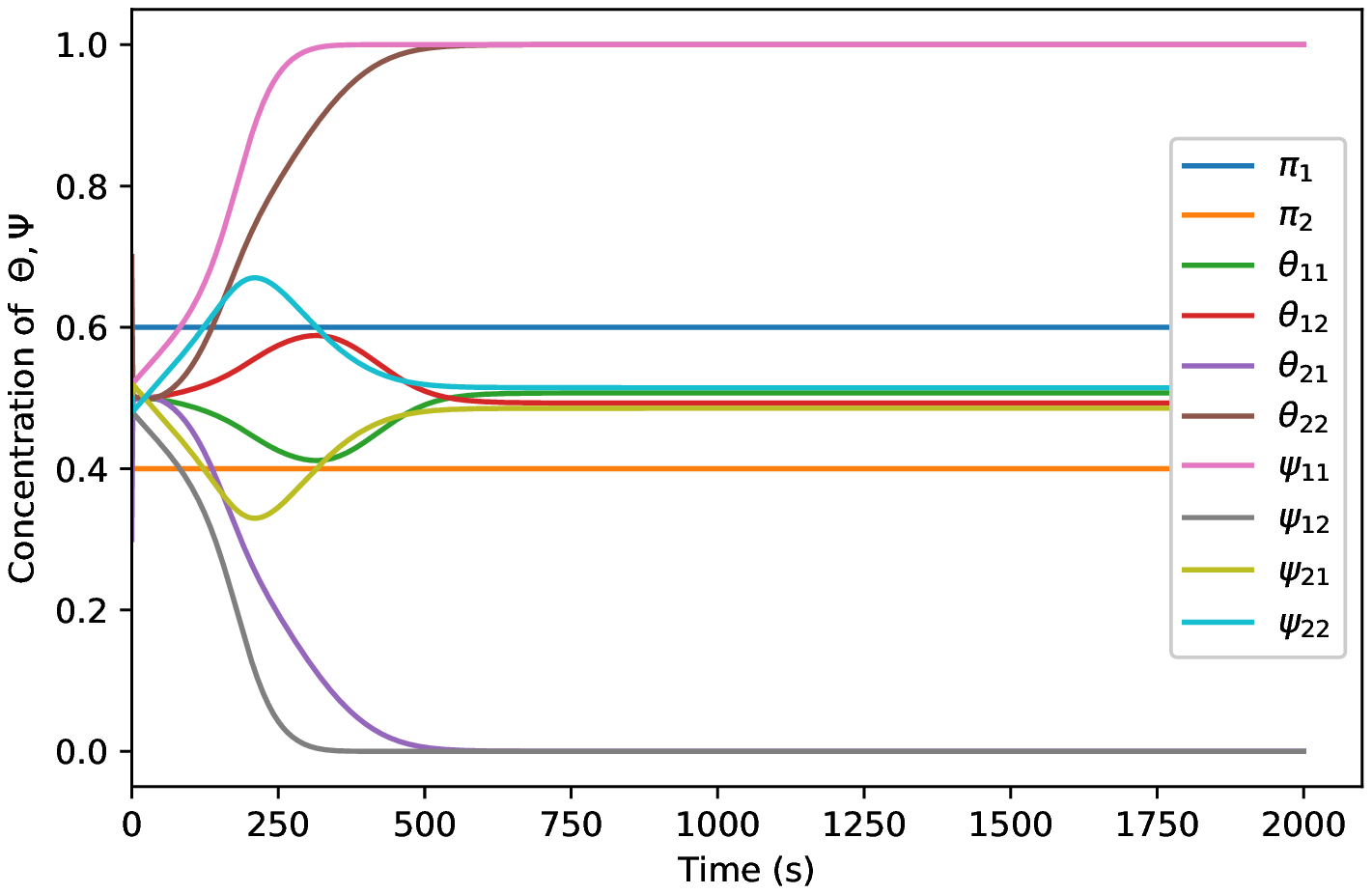} }}
    \subfloat[Baum-Welch Algorithm]{{\includegraphics[width=6.5cm]{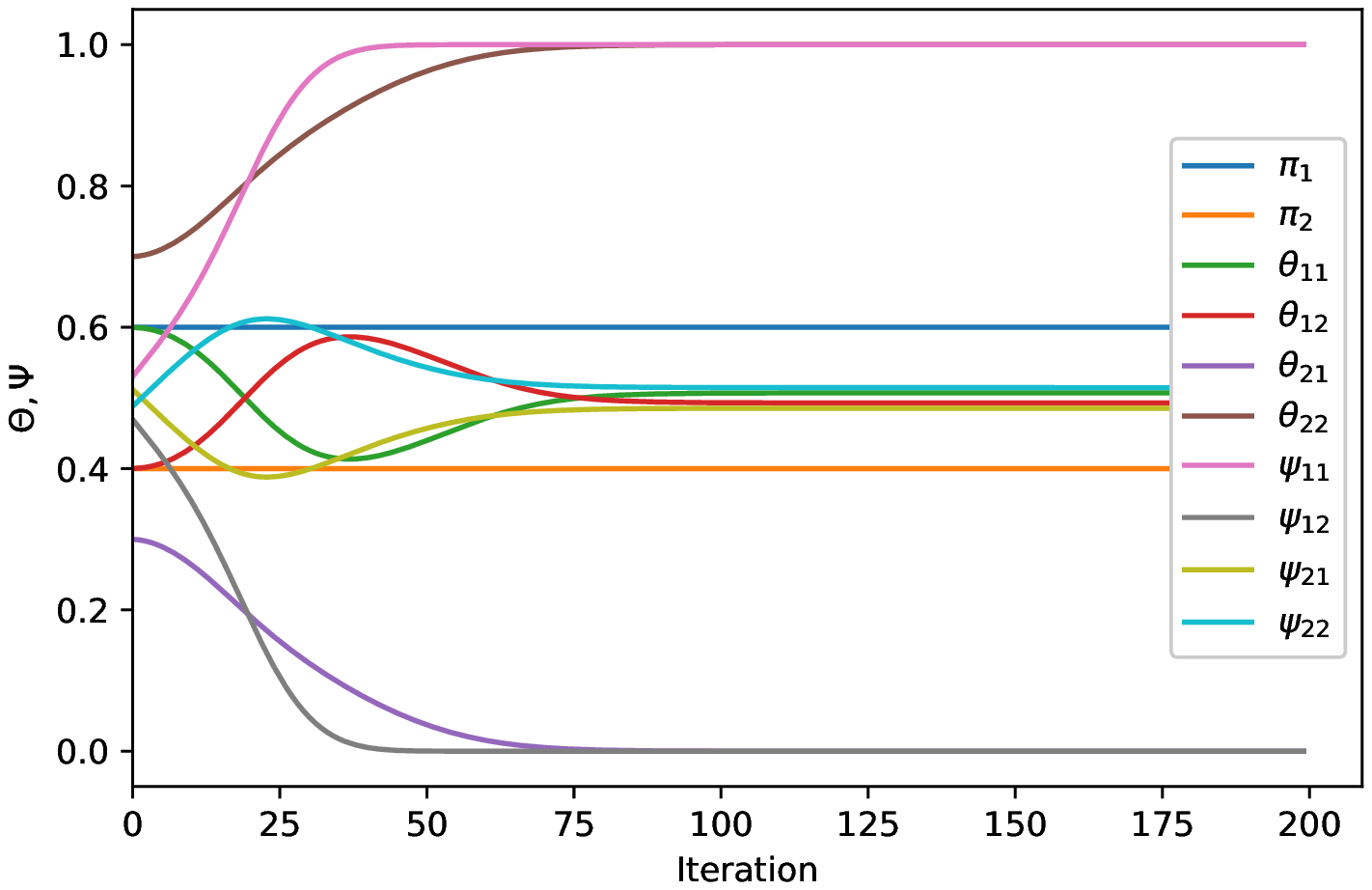} }}
    \caption{Simulation of Example~\ref{Ex::1}: Both simulations are started at exactly the same initial vector. This may not be apparent in the figure because concentration of some species change rapidly at start in the reaction network.}
    \label{fig:Sim1}%
  \end{figure}\\ 
  For our numerical solution, we observe that the reaction network equilibrium point coincides with the Baum-Welch steady state $\hat\theta=\begin{bmatrix}0.5071 &0.4928\\0.0000 &1.0000\end{bmatrix}$ and $\hat\psi=\begin{bmatrix}1.0000 & 0.0000\\  0.4854 & 0.5145 \end{bmatrix}$ (See Figure~\ref{fig:Sim1}).
\end{example}

The next example shows that the Chemical Baum-Welch algorithm can sometimes get stuck at points that are not equilibria for the Baum-Welch algorithm. This is a problem especially for very short sequences, and is probably happening because in such settings, the best model sets many parameters to zero. When many species concentrations are set to $0$, many reactions get turned off, and the network gets stuck away from the desired equilibrium. We believe this will not happen if the HMM generating the observed sequence has nonzero parameters, and the observed sequence is long enough.

\begin{example}	\label{Ex::2} 
  Consider an HMM  $(H,V, \theta,\psi,\pi)$ with two hidden states $H=\{H_1,H_2\}$ and two emitted symbols $V=\{V_1,V_2\}$, initial distribution $\pi=(0.6,0.4)$ is fixed, initial transition probability $\theta=\begin{bmatrix}0.6 &0.4\\0.3 &0.7\end{bmatrix}$, and initial emission probability $\psi=\begin{bmatrix}0.5 & 0.5\\0.5 & 0.5 \end{bmatrix}$. Suppose we wish to learn $(\theta,\psi)$ for the sequence $(V_1,V_2,V_1,V_2,V_1)$.
  We again simulate the corresponding reaction network and also perform the Baum-Welch algorithm for comparison.
  \begin{figure}[h]
    \centering
    \subfloat[Reaction Network Dynamics]{{\includegraphics[width=6.5cm]{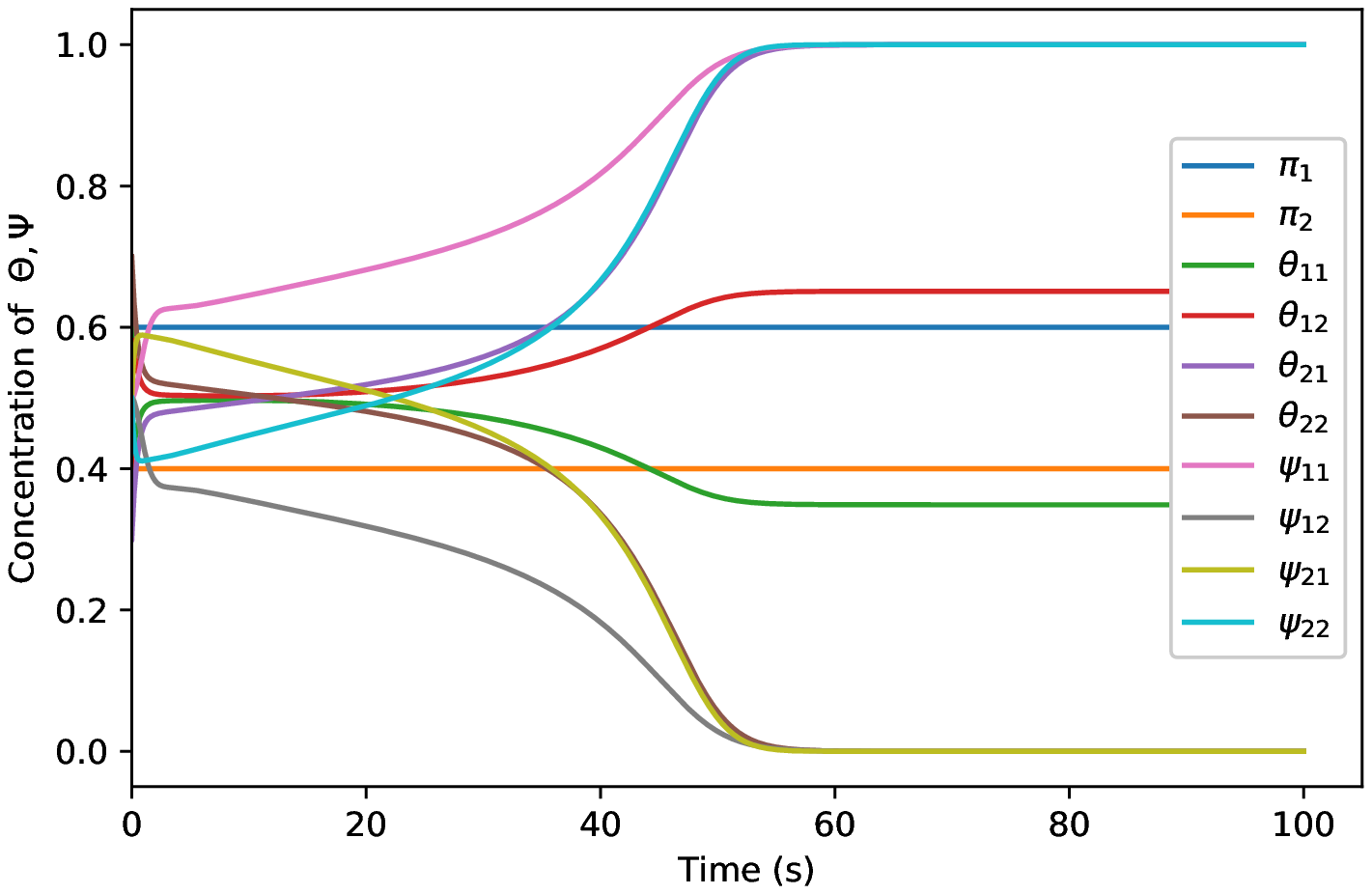} }}
    \subfloat[Baum-Welch Algorithm]{{\includegraphics[width=6.5cm]{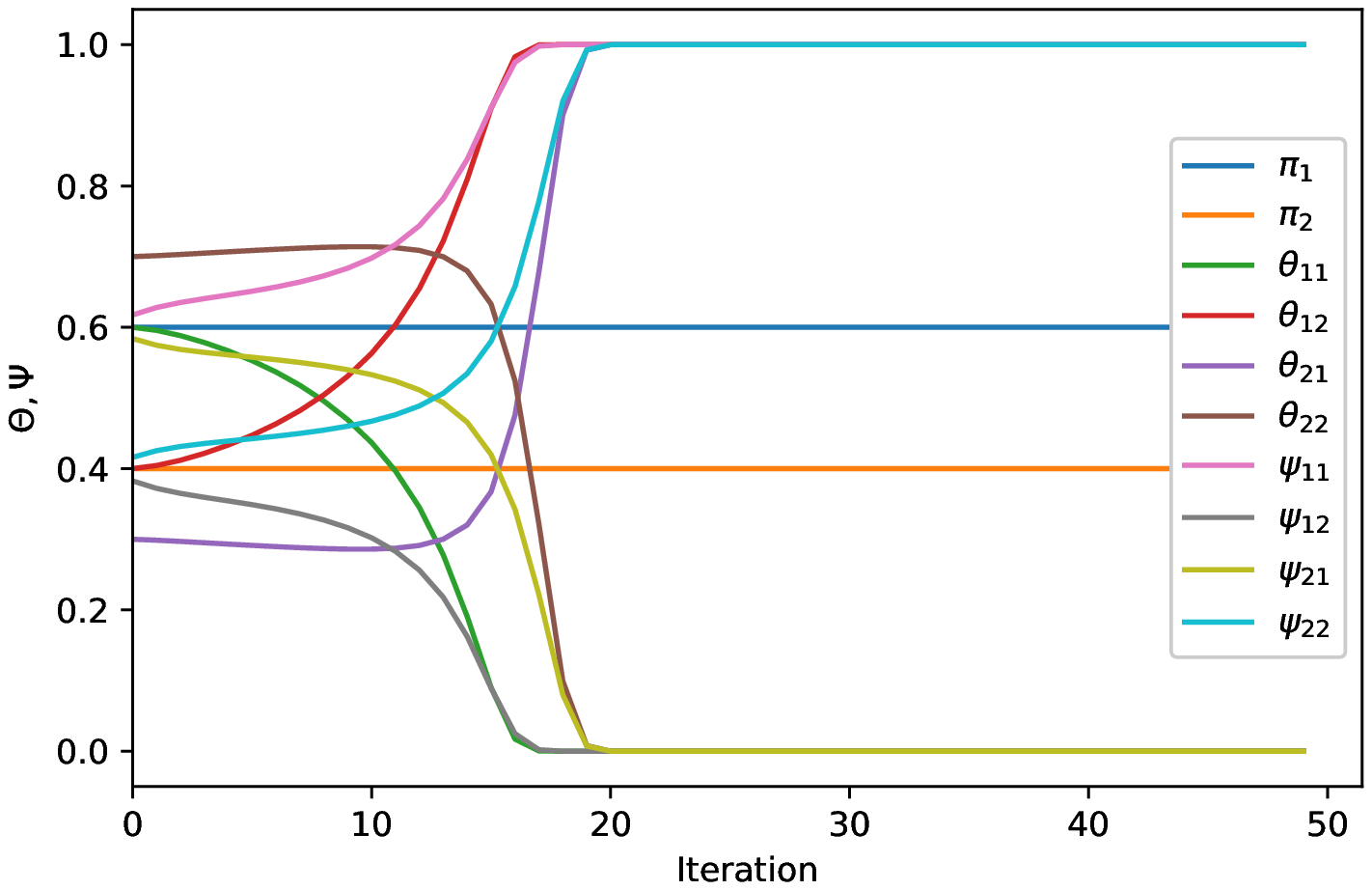} }}
    \caption{Simulation of Example~\ref{Ex::2}: Both simulations are started at exactly the same initial vector. This may not be apparent in the figure because concentration of some species change rapidly at start in the reaction network.}
    \label{fig:Sim2}%
  \end{figure}\\ 
  Figure \ref{fig:Sim2} shows that the reaction network equilibrium point does not coincide with the Baum-Welch steady state. Both converge to $\hat\psi=\begin{bmatrix}1.0000 & 0.0000\\  0.0000 & 1.0000 \end{bmatrix}$. However, the reaction network converges to $\hat\theta=\begin{bmatrix}0.3489 &0.6510\\1.0000 &0.0000\end{bmatrix}$ whereas the Baum-Welch algorithm converges to $\hat\theta=\begin{bmatrix}0.0000 &1.0000\\1.0000 &0.0000\end{bmatrix}$ which happens to be the true maximum likelihood point. Note that this does not contradict Theorem 2 because the fixed point of this system is a boundary point and not a positive fixed point.
\end{example}

\section{Related work}\label{sec::related}
In previous work, our group has shown that reaction networks can perform the Expectation Maximization (EM) algorithm for partially observed log linear statistical models~\cite{virinchi2018reaction,EMposterAbhinav}. That algorithm also applies ``out of the box'' to learning HMM parameters. The problem with that algorithm is that the size of the reaction network would become exponentially large in the length of the sequence, so that even examples like Example~\ref{Ex::1} with an observation sequence of length $25$ would become impractical. In contrast, the scheme we have presented in this paper requires only a linear growth with sequence length. We have obtained the savings by exploiting the graphical structure of HMMs. This allows us to compute the likelihoods $\alpha$ in a ``dynamic programming'' manner, instead of having to explicitly represent each path as a separate species.

Napp and Adams~\cite{napp2013message} have shown how to compute marginals on graphical models with reaction networks. They exploit graphical structure by mimicking belief propagation. Hidden Markov Models can be viewed as a special type of graphical model where there are $2L$ random variables $X_1,X_2,\dots, X_L,Y_1,Y_2,\dots,Y_L$ with the $X$ random variables taking values in $H$ and the $Y$ random variables in $V$. The $X$ random variables form a Markov chain $X_1 \,\rule[.8ex]{10pt}{.5pt}\, X_2 \,\rule[.8ex]{10pt}{.5pt} \dots \rule[.8ex]{10pt}{.5pt}\, X_L$. In addition, there are $L$ edges from $X_l$ to $Y_l$ for $l=1$ to $L$ denoting observations. Specialized to HMMs, the scheme of Napp and Adams would compute the equivalent of steady state values of the $\gamma$ species, performing a version of the E step. They are able to show that true marginals are fixed points of their scheme, which is similar to our Theorem~\ref{thm:bw}. Thus their work may be viewed as the first example of a reaction network scheme that exploits graphical structure to compute E projections. Our E step goes further by proving correctness as well as exponential convergence. Their work also raises the challenge of extending our scheme to all graphical models.

Poole et al.~\cite{poole2017chemical} have described Chemical Boltzmann Machines, which are reaction network schemes whose dynamics reconstructs inference in Boltzmann Machines. This inference can be viewed as a version of E projection. No scheme for learning is presented. The exact schemes presented there are exponentially large. The more realistically sized schemes are presented there without proof. In comparison, our schemes are polynomially sized, provably correct if the equilibrium is positive, and perform both inference and learning for HMMs.

Zechner et al.~\cite{zechner2016molecular} have shown that Kalman filters can be implemented with reactions networks. Kalman filters can be thought of as a version of Hidden Markov Models with continuous hidden states \cite{KalmanRoweis}. It would be instructive to compare their scheme with ours, and note similarities and differences. In passing from position $l$ to position $l+1$ along the sequence, our scheme repeats the same reaction network that updates $\alpha_{l+1}$ using $\alpha_l$ values. It is worth examining if this can be done ``in place'' so that the same species can be reused, and a reaction network can be described that is not tied to the length $L$ of the sequence to be observed. 

Recently Cherry et al.~\cite{cherry2018scaling} have given a brilliant experimental demonstration of learning with DNA molecules. They have empirically demonstrated a DNA molecular system that can classify $9$ types of handwritten digits from the MNIST database. Their approach is based on the notion of ``winner-takes-all'' circuits due to Maass~\cite{maass2000computational} which was originally a proposal for how neural networks in the brain work. Winner-take-all might also be capable of approximating HMM learning, at least in theory~\cite{kappel2014stdp}, and it is worth understanding precisely how such schemes relate to the kind of scheme we have described here. It is conceivable that our scheme could be converted to winner-take-all by getting different species in the same flower to give negative feedback to each other. This might well lead to sampling the most likely path, performing a decoding task similar to the Viterbi algorithm.

\section{Discussion}\label{sec::discuss}
We have described a one-pot one-shot reaction network implementation of the Baum-Welch algorithm. Firstly, this involves proposing a reaction system whose positive fixed points correspond to equilibria of the Baum-Welch Algorithm. Secondly, this involves establishing the conditions under which convergence of solutions to the equilibria can be accomplished. Further, from a practical perspective, it is essential to obtain an implementation that does not rely on repeated iteration of a reaction scheme but only requires to be run once.

As we observe in Remark~\ref{rmk:topology}, there is a whole class of reaction networks that implements the Baum-Welch algorithm. We have proposed one such network and are aware that there are other networks, potentially with more efficient dynamical and analytical properties than the one proposed here. Finding and characterizing efficient reaction networks that can do complicated statistical tasks will likely be of future concern.

We have only discussed deterministic dynamical properties of the network. However, in realistic biological contexts one might imagine that the network is implemented by relatively few molecules such that stochastic effects are significant. Consequently, the study of the Baum-Welch reaction system under stochastic mass-action kinetics is likely to be of interest.

Lastly, we have mentioned the Viterbi algorithm, but have made no attempt to describe how the maximum likelihood sequence can be recovered from our reaction network. This decoding step is likely to be of as much interest for molecular communication systems and molecular multi-agent systems as it is  in more traditional domains of communications and multi-agent reinforcement learning. Because of the inherent stochasticity of reaction networks, there might even be opportunities for list decoding by sampling different paths through the hidden states that have high probability conditioned on the observations. This might give an artificial cell the ability to ``imagine'' different possible realities, and act assuming one of them to be the case, leading to an intrinsic stochasticity and unpredictability to the behavior.

\bibliographystyle{unsrt}
\bibliography{ref.bib}

\begin{appendices}
  \renewcommand{\thesection}{\appendixname~\Alph{section}}
  \section{}

  \subsection{Comparing points of equilibria}
  We will now prove Theorem~\ref{thm:bw}. For the sake of convenience, we first recall the statement.\\\
  \par\noindent\textbf{Theorem~\ref{thm:bw}.}
  {\em Every fixed point of the Baum-Welch algorithm for an HMM $\mathcal{M}=(H,V,\theta,\psi,\pi)$ is a fixed point for the corresponding Chemical Baum-Welch Algorithm with permissible rates $k$.}
  \begin{proof}\label{pf:bw}
    Consider a point $\Phi = (\alpha',\beta',\gamma',\xi',\theta',\psi')$ with $\alpha',\beta',\gamma'\in\mathbb{R}_{\ge0}^{L\times H}$, $\xi'\in\mathbb{R}_{\ge0}^{(L-1)\times H\times H}$, $\theta'\in\mathbb{R}_{\ge0}^{H\times H}$ and $\psi'\in\mathbb{R}_{\ge0}^{H\times V}$. If $\Phi$ is a fixed point of Baum-Welch Algorithm then it must satisfy:
    \begin{itemize}
    \item $\alpha'_{1h} = \pi_h\psi'_{h v_1}$ for all $h\in H$. Then for the chemical Baum-Welch Algorithm we have \[\dot\alpha_{1h}\big|_\Phi=k^\alpha_{1h}\left({\alpha'_{1h^*} }{ \pi_h\psi'_{h v_1}}-{\alpha'_{1h} }{ \pi_{h^*}\psi'_{h^* v_1}}\right)=0\] for all $h\in H\setminus\{h^*\}$ and $\dot\alpha_{1h^*}\big|_\Phi=-\sum_{h\not=h^*}\dot\alpha_{1h}\big|_\Phi=0$.
    \item $\alpha'_{l h}=\sum_{g\in H} \alpha'_{l-1,g}\theta'_{gh}\psi'_{h v_l}${ for all }$h\in H${ and }$l=2,\ldots,L$ . Then for the chemical Baum-Welch Algorithm we have
      \[\dot\alpha_{lh}\big|_\Phi=k^\alpha_{lh}\left({\alpha'_{l h^*}}{\sum_{g\in H} \alpha'_{l-1,g}\theta'_{gh}\psi'_{h v_l}}-{\alpha'_{l h}}{\sum_{g\in H} \alpha'_{l-1,g}\theta'_{gh^*}\psi'_{h^* v_l}}\right)=0\]
      { for all }$h\in H\setminus\{h^*\}${ and }$l=2,\ldots,L$ and $\dot\alpha_{lh^*}\big|_\Phi=-\sum_{h\not=h^*}\dot\alpha_{lh}\big|_\Phi=0$.
    \item $\beta'_{l h}=\sum_{g\in H} \theta'_{hg}\psi'_{g v_{l+1}}\beta'_{l+1,g}${ for all }$h\in H${ and }$l=1,\ldots,L-1$. Then for the chemical Baum-Welch Algorithm we have
      \[\dot\beta_{lh}\big|_\Phi=k^\beta_{lh}\left({\beta'_{l h^*}}{\sum_{g\in H} \theta'_{h^*g}\psi'_{g v_{l+1}}\beta'_{l+1,g}}-{\beta'_{l h}}{\sum_{g\in H} \theta'_{hg}\psi'_{g v_{l+1}}\beta'_{l+1,g}}\right)=0\]
      { for all }$h\in H\setminus\{h^*\}${ and }$l=1,2,\dots,L-1$ and $\dot\beta_{lh^*}\big|_\Phi=-\sum_{h\not=h^*}\dot\beta_{lh}\big|_\Phi=0$.
    \item $\gamma'_l(h)=\frac{\alpha'_{l h}\beta'_{l h}}{\sum_{g\in H} \alpha'_{l g}\beta'_{l g}}${ for all }$h\in H${ and }$l=1,2,\dots,L-1$.  Then for the chemical Baum-Welch Algorithm we have
      \[\dot\gamma_{lh}=k^\gamma_{lh}\left({\gamma'_{lh^*}}{\alpha'_{l h}\beta'_{l h}}-{\gamma'_{lh}}{\alpha'_{l h^*}\beta'_{l h^*}}\right)=0\]
      { for all }$h\in H\setminus\{h^*\}${ and }$l=1,2,\dots,L-1$ and $\dot\gamma_{lh^*}=-\sum_{h\not=h^*}\dot\gamma_{lh}=0$.
    \item $\xi'_l(g,h)=\frac{\alpha'_{l g}\theta'_{gh}\psi'_{ hv_{l+1}}\beta'_{l+1,h}}{\sum_{f\in H} \alpha'_{l f}\beta'_{l f}}\text{ for all } g,h\in H\text{ and }l=1,\ldots,L-1 $.  Then for the chemical Baum-Welch Algorithm we have
      \[ \dot\xi_{lgh}\big|_\Phi=k^\xi_{lgh}\left({\xi'_{lh^*h^*}}{\alpha'_{l g}\theta'_{gh}\psi'_{ hv_{l+1}}\beta'_{l+1,h}}-{\xi'_{lgh}}{\alpha'_{l h^*}\theta'_{h^*h^*}\psi'_{ h^*v_{l+1}}\beta'_{l+1,h^*}}\right)=0\]
      $\text{ for all } g,h\in H\times H\setminus\{(h^*,h^*)\}\text{ and } l=1,\ldots,L-1$ and $\dot\xi_{lh^*h^*}\big|_\Phi=-\sum_{(g,h)\not=(h^*,h^*)}\dot\xi_{lgh}\big|_\Phi=0$.
    \item $\theta'_{gh}=\frac{\sum_{l=1}^{L-1}\xi'_l(g,h)}{\sum_{l=1}^{L-1}\sum_{f\in H}\xi'_l(g,f)}\text{ for all }g,h\in H $.  Then for the chemical Baum-Welch Algorithm we have
      \[ \dot\theta_{gh}\big|_\Phi=k^\theta_{gh}\left({\theta'_{gh^*}}{\sum_{l=1}^{L-1}\xi'_{lgh}}-{\theta'_{gh}}{\sum_{l=1}^{L-1}\xi'_l(g,h^*)}\right)=0\]
      $\text{ for all }g\in H\text{ and }h\in H\setminus\{h^*\}$ and $\dot\theta_{gh^*}\big|_\Phi=-\sum_{h\not=h^*}\dot\theta_{gh}\big|_\Phi=0 $.
    \item $\psi'_{hw}=\frac{\sum_{l=1}^L\gamma'_l(h)\delta_{w,v_l}}{\sum_{l=1}^L\gamma'_l(h)}\text{ for all }h\in H\text{ and }w\in V $.  Then for the chemical Baum-Welch Algorithm we have
      \[\dot\theta_{gh}\big|_\Phi=k^\theta_{gh}\left({\theta'_{gh^*}}{\sum_{l=1}^{L-1}\xi'_{lgh}}-{\theta'_{gh}}{\sum_{l=1}^{L-1}\xi'_l(g,h^*)}\right) =0\]
      $\text{ for all }h\in H\text{ and }w\in V\setminus\{v^*\}$ and $\dot\psi_{hw^*}\big|_\Phi=-\sum_{w\not=w^*}\dot\psi_{hw}\big|_\Phi =0$.
    \end{itemize}
    So $\Phi$ is fixed point of the chemical Baum-Welch Algorithm.
  \qed\end{proof}
  
  We will now prove Theorem~\ref{thm:bwcrn}. For the sake of convenience, we first recall the statement.\\\
  \par\noindent\textbf{Theorem~\ref{thm:bwcrn}.}
  {\em Every positive fixed point for the Chemical Baum-Welch Algorithm on a {Baum Welch Reaction system} $(BW(\mathcal{M},h^*,v^*,L),k)$ with permissible rate $k$ is a fixed point for the Baum-Welch algorithm for the HMM $\mathcal{M}=(H,V,\theta,\psi,\pi)$.}
  
  \begin{proof}\label{pf:bwcrn}
    Consider a \emph{positive} point $\Phi=(\alpha',\beta',\gamma',\xi',\theta',\psi')$ with $\alpha',\beta',\gamma'\in\mathbb{R}_{>0}^{L\times H}$, $\xi'\in\mathbb{R}_{>0}^{(L-1)\times H\times H}$, $\theta'\in\mathbb{R}_{>0}^{H\times H}$ and $\psi'\in\mathbb{R}_{>0}^{H\times V}$. If $\Phi$ is a fixed point for the Chemical Baum-Welch Algorithm then we must have:
    \begin{itemize}
    \item $\dot\alpha_{1h}\big|_\Phi=0$ for all $h\in H$. This implies ${\alpha'_{1h} }\times{ \pi_{h^*}\psi'_{h^* v_1}}={\alpha'_{1h^*} }\times{ \pi_h\psi'_{h v_1}}\quad\text{ for all }h\in H\setminus{h^*}$. Since $\Phi$ is positive, this implies \[{\alpha'_{1h} }=\left({ \pi_h\psi'_{h v_1}}\right)\frac{\sum_{f\in H}\alpha'_{1f} }{\sum_{f\in H} \pi_{f}\psi'_{f v_1}}\quad\text{ for all }h\in H\]
    \item $\dot\alpha_{lh}\big|_\Phi=0$ for all $h\in H$ and $l=2,\ldots,L$. This implies ${\alpha'_{l h}}\times{\sum_{g\in H} \alpha'_{l-1,g}\theta'_{gh^*}\psi'_{h^* v_l}}={\alpha'_{l h^*}}\times{\sum_{g\in H} \alpha'_{l-1,g}\theta'_{gh}\psi'_{h v_l}}\quad\text{ for all }h\in H\setminus\{h^*\}\text{ and }l=2,\ldots,L$. Since $\Phi$ is positive, this implies \[{\alpha'_{l h}}=\left({\sum_{g\in H} \alpha'_{l-1,g}\theta'_{gh}\psi'_{h v_l}}\right)\frac{\sum_{f\in H}\alpha'_{l f}}{\sum_{f,g\in H} \alpha'_{l-1,g}\theta'_{gf}\psi'_{f v_l}}\quad\text{ for all }h\in H\text{ and }l=2,\ldots,L\]
    \item $\dot\beta_{lh}\big|_\Phi=0$ for all $h\in H$ and $l=1,\ldots,L$. This implies ${\beta'_{l h}}\times{\sum_{g\in H} \theta'_{hg}\psi'_{g v_{l+1}}\beta'_{l+1,g}}={\beta'_{l h^*}}\times{\sum_{g\in H} \theta'_{h^*g}\psi'_{g v_{l+1}}\beta'_{l+1,g}}\quad\text{ for all }h\in H\setminus\{h^*\}\text{ and }l=1,\ldots,L-1$. Since $\Phi$ is positive, this implies \[{\beta'_{l h}}=\left({\sum_{g\in H} \theta'_{hg}\psi'_{g v_{l+1}}\beta'_{l+1,g}}\right)\frac{\sum_{f\in H}\beta'_{l f}}{\sum_{f,g\in H} \theta'_{fg}\psi'_{g v_{l+1}}\beta'_{l+1,g}}\quad\text{ for all }h\in H\text{ and }l=1,\ldots,L-1\]
    \item $\dot\gamma_{lh}\big|_\Phi=0$ for all $h\in H$ and $l=1,\ldots,L$. This implies ${\gamma'_l(h)}\times{\alpha'_{l h^*}\beta'_{l h^*}}={\gamma'_l(h^*)}\times{\alpha'_{l h}\beta'_{l h}}\quad\text{ for all }h\in H\setminus\{h^*\}\text{ and }l=1,2,\dots,L-1$. Since $\Phi$ is positive, this implies \[{\gamma'_{lh}}=\left(\frac{\alpha'_{l h}\beta'_{l h}}{\sum_{g\in H}\alpha'_{l g}\beta'_{l g}}\right){\sum_{g\in H}\gamma'_{lg}}\quad\text{ for all }h\in H\text{ and }l=1,2,\dots,L-1\]
    \item $\dot\xi_{lgh}\big|_\Phi=0$ for all $g,h\in H$ and $l=1,\ldots,L-1$. This implies ${\xi'_l(g,h)}\times{\alpha'_{l h^*}\theta'_{h^*h^*}\psi'_{ h^*v_{l+1}}\beta'_{l+1,h^*}}={\xi'_l(h^*,h^*)}\times{\alpha'_{l g}\theta'_{gh}\psi'_{ hv_{l+1}}\beta'_{l+1,h}}$ $\text{ for all } g,h\in H\times H\setminus\{(h^*,h^*)\}\text{ and } l=1,\ldots,L-1$. Since $\Phi$ is positive, this implies \[{\xi'_{lgh}}=\left(\frac{\alpha'_{l g}\theta'_{gh}\psi'_{ hv_{l+1}}\beta'_{l+1,h}}{\sum_{e,f\in H}\alpha'_{l f}\theta'_{ef}\psi'_{ fv_{l+1}}\beta'_{l+1,f}}\right){\sum_{e,f\in H}\xi'_{lef}} \quad\text{ for all } g,h\in H\times H\text{ and } l=1,\ldots,L-1\]
    \item $\dot\theta_{gh}\big|_\Phi=0$ for all $g,h\in H$. This implies ${\theta'_{gh}}\times{\sum_{l=1}^{L-1}\xi'_l(g,h^*)}= {\theta'_{gh^*}}\times{\sum_{l=1}^{L-1}\xi'_l(g,h)}\quad\text{ for all }g\in H\text{ and }h\in H\setminus\{h^*\}$. Since $\Phi$ is positive, this implies \[{\theta'_{gh}}=\left(\frac{\sum_{l=1}^{L-1}\xi'_{lgh}}{\sum_{f\in H}\sum_{l=1}^{L-1}\xi'_{lgf}}\right){\sum_{f\in H}\theta'_{gf}}\quad\text{ for all }g\in H\text{ and }h\in H\]
    \item $\dot\psi_{hw}\big|_\Phi=0$ for all $h\in H$ and $w\in V$. This implies ${\psi'_{hw}}\times{\sum_{l=1}^L\gamma'_l(h)\delta_{v^*,v_l}} = {\psi'_{hv^*}}\times{\sum_{l=1}^L\gamma'_l(h)\delta_{w,v_l}}\quad\text{ for all }h\in H\text{ and }w\in V\setminus\{v^*\}$. Since $\Phi$ is positive, $E_{lv}=\delta_{v,v_l}$ and $\sum_{v\in V}\delta_{v,v_l}=1$ this implies \[{\psi'_{hw}}=\left(\frac{\sum_{l=1}^L\gamma'_{lh}\delta_{w,v_l}}{\sum_{l=1}^L\gamma'_{lh}}\right){\sum_{v\in V}\psi'_{hv}}\quad\text{ for all }h\in H\text{ and }w\in V\]
    \end{itemize}
    Because of the relaxation we get by Remark~\ref{rmk:scale}, the point $\Phi$ qualifies as a fixed point of the Baum-Welch algorithm.  
  \qed\end{proof}

  \subsection{Rate of convergence analysis}
  In this section we will prove Theorem~\ref{th:EMexpconv1} and Theorem~\ref{th:EMexpconv2}, but first we will state and prove two useful lemmas.
  
  \begin{lemma}\label{lem:kernel}
    Let $A$ be an arbitrary $n\times n$ matrix. Let $W$ be an $r\times n$ matrix comprising of $r$ linearly independent left kernel vectors of $A$ so that  $WA=0_{r,n}$, where $0_{i,j}$ denotes a $i\times j$ matrix with all entries zero. Further suppose $W$ is in the row reduced form, that it is, \[W=\begin{pmatrix}  W' & I_r\end{pmatrix}\] where $I_j$ denotes the $j\times j$ identity matrix and $W'$ is a $ r\times (n-r)$ matrix. Let $A$ be given as \[A=\begin{pmatrix} A_{11} & A_{12} \\ A_{21} & A_{22} \end{pmatrix},\] where $A_{11}$ is a $(n-r)\times(n-r)$ matrix, $A_{12}$ is a $ (n-r)\times r$ matrix, $A_{21}$ is a $r\times (n-r)$ matrix, and $A_{22}$ is a $r\times r$ matrix.
    
    Then the $n-r$ eigenvalues (with multiplicity) of the matrix $A_{11} -A_{12}W'$ are the same as the eigenvalues of $A$ except for $r$ zero eigenvalues.
  \end{lemma}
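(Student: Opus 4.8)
The plan is to exhibit an explicit similarity transformation that block-triangularizes $A$, after which the eigenvalue count follows immediately from invariance of the spectrum under conjugation. First I would unpack the hypothesis $WA=0_{r,n}$ using the row-reduced form $W=(W'\mid I_r)$ together with the block decomposition of $A$. Block-multiplying gives exactly two identities,
\[
  W'A_{11}+A_{21}=0_{r,n-r}, \qquad W'A_{12}+A_{22}=0_{r,r},
\]
that is, $A_{21}=-W'A_{11}$ and $A_{22}=-W'A_{12}$. These are the only consequences of the left-kernel hypothesis that I will need.

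Next I would introduce the conjugating matrix built directly from the kernel data $W'$,
\[
  P=\begin{pmatrix} I_{n-r} & 0 \\ W' & I_r \end{pmatrix}, \qquad
  P^{-1}=\begin{pmatrix} I_{n-r} & 0 \\ -W' & I_r \end{pmatrix},
\]
which is manifestly invertible. Computing $AP^{-1}$ and then $P(AP^{-1})$ block by block, I would substitute the two identities above. The top-left block is $A_{11}-A_{12}W'$ and the top-right block is $A_{12}$; the bottom-right block becomes $W'A_{12}+A_{22}=0$, and the bottom-left block simplifies to $-(W'A_{12}+A_{22})W'=0$. Hence
\[
  PAP^{-1}=\begin{pmatrix} A_{11}-A_{12}W' & A_{12} \\ 0_{r,n-r} & 0_{r,r} \end{pmatrix},
\]
which is block upper-triangular.

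Finally, since $A$ and $PAP^{-1}$ are similar they share the same characteristic polynomial, and the characteristic polynomial of a block upper-triangular matrix factors as the product over the diagonal blocks. The diagonal blocks here are $A_{11}-A_{12}W'$ and the $r\times r$ zero block, so $\det(\lambda I_n-A)=\det\bigl(\lambda I_{n-r}-(A_{11}-A_{12}W')\bigr)\cdot\lambda^{r}$. This is exactly the assertion: the $n-r$ eigenvalues of $A_{11}-A_{12}W'$ together with $r$ zero eigenvalues are, with multiplicity, the eigenvalues of $A$.

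I do not expect a genuine obstacle; once the transformation $P$ is chosen the rest is a routine block computation. The single idea is to read the two block relations off $WA=0$ and to observe that conjugating by the unipotent matrix whose lower-left block is $W'$ clears out the entire bottom block-row, precisely because those relations say the rows of $W$ annihilate $A$. The only point requiring a sentence of care is confirming that $A_{11}-A_{12}W'$ genuinely records the nontrivial part of the spectrum, which the factorization of the characteristic polynomial makes explicit.
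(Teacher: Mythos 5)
Your proof is correct and follows essentially the same route as the paper's: the identical conjugating matrix $P=\bigl(\begin{smallmatrix} I_{n-r} & 0 \\ W' & I_r\end{smallmatrix}\bigr)$, the same use of $WA=0$ to kill the bottom block-row, and the same factorization of the characteristic polynomial of the resulting block upper-triangular matrix. The only (immaterial) difference is that you multiply $AP^{-1}$ first while the paper computes $PA$ first.
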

  \begin{proof}
    Consider the $n\times n$ invertible matrix $P$ given by 
    $$P=\begin{pmatrix} I_{n-r} &0_{n-r,r} \\ W'  & I_r \end{pmatrix}, \quad P^{-1}=\begin{pmatrix} I_{n-r} &0_{n-r,r} \\ -W'   & I_r\end{pmatrix},$$
    with determinant $\det(P)=\det(P^{-1})=1$. We have
    \begin{align*}
      PAP^{-1}&= \begin{pmatrix}  I_{n-r} &0_{n-r,r} \\ W'  & I_r  \end{pmatrix}\begin{pmatrix} A_{11} & A_{12} \\ A_{21} & A_{22} \end{pmatrix}\begin{pmatrix} I_{n-r} &0_{n-r,r} \\ -W'   & I_r\end{pmatrix}\\
              &=\begin{pmatrix}  A_{11} & A_{12} \\ 0_{r,n-r} & 0_{r,r} \end{pmatrix}\begin{pmatrix} I_{n-r} &0_{n-r,r} \\ -W'   & I_r\end{pmatrix} \\
              &=\begin{pmatrix}   A_{11} -A_{12}W' & A_{12}  \\  0_{r,n-r} & 0_{r,r}  \end{pmatrix}
    \end{align*}
    This implies that the characteristic polynomial of $A$ fulfils
    \begin{align*}
      \det(A-\lambda I_n) &=\det(P)\det(A-\lambda I_n)\det(P^{-1}) = \det(PAP^{-1}-\lambda I_n )\\
                          &=\det\begin{pmatrix}  A_{11} -A_{12}W' - \lambda I_{n-r} & A_{12}  \\  0_{r,n-r} & 0_{r,r}   -\lambda I_r\end{pmatrix} \\
                          &= (-1)^r\lambda^r \det(A_{11} -A_{12}W' -\lambda I_{n-r}),
    \end{align*}
    and the statement follows.
  \qed\end{proof}
  
  Now we revisit the observation that every reaction in the Baum-Welch reaction network is a monomolecular transformation catalyzed by a set of species.  For the purposes of our analysis, each reaction can be abstracted as a monomolecular reaction with time varying rate constants. This prompts us to consider the following monomolecular reaction system with $n$ species $X_1,\ldots,X_n$ and $m$ reactions
  \[X_{r_j}\cee{->[k_j(t)]} X_{p_j},\quad \text{for}\quad j=1,\ldots,m,\]
  where $r_j\not=p_j$, and $r_j,p_j\in\{1,\ldots,n\}$, and $k_j(t)$, $j=1,\ldots,m$, are mass-action reaction rate constants, possibly depending on time. We assume $k_j(t)>0$ for  $t\ge 0$ and let $k(t)=(k_1(t),\ldots,k_{m}(t))$ be the vector of  reaction rate constants.  Furthermore, we assume there is at most one reaction $j$ such that $(r_j,p_j)=(r,p)\in\{1,\ldots,n\}^2$ and that the reaction network is strongly connected. The later means there is a reaction path from any species $X_i$ to any other species $X_{i'}$. (In reaction network terms it means the network is weakly reversible and deficiency zero.)
  
  The mass action kinetics of this reaction system is given by the ODE system
  \begin{align*}
    \dot{x}_i&=-x_i\!\!\sum_{j=1\colon r_j=i}^{m}k_j(t)  +\sum_{j=1\colon p_j=i}^{m}k_j(t) x_{r_j}, \qquad i=1,\ldots,n.
  \end{align*}
  Define the $n\times n$ matrix $A(t)=(a_{ii'}(t))_{i,i'=1,\ldots,n}$ by
  \begin{equation}\label{eq:matrixA}
    \begin{aligned}
      a_{ii}(t)&=-\sum_{j=1\colon r_j=i}^{m}k_j(t), \\
      a_{ii'}(t)&=k_j(t),\quad\text{if there is }j\in\{1,\ldots,m\}\,\,\text{such that}\,\, (r_j,p_j)=(i',i).
    \end{aligned}
  \end{equation}
  Then the ODE system might be written as
  \begin{equation}\label{eq:equiv}
    \dot{x}= A(t) x.
  \end{equation}
  Note that the column sums of $A(t)$ are zero, implying that $\sum_{i=1}^n x_i$ is conserved.

  \begin{lemma}\label{lem:expconv}
    Assume $k(t)$ for $t\ge 0$, converges exponentially fast towards $k=(k_1,\ldots,k_{m})\in\R^{m}_{>0}$  as $t\to\infty$, that is, there exists $\gamma_1>0$ and $K_1>0$ such that
    \[\norm{ k(t)-k}\le  K_1  e^{-\gamma_1 t}\quad \text{for}\quad t\ge 0.\]
    Let $A(t)$ be the matrix as defined in equation~\ref{eq:matrixA}. And let $A$ be the matrix obtained with $k$ inserted for $k(t)$ in the matrix $A(t)$ that is, $A=\lim_{t\to\infty} A(t)$.
    
    Then solutions of ODE system $\dot{x}=A(t)x$ starting at $x(0)\in\mathbb{R}^n_{\ge 0}$ converges exponentially fast towards the equilibrium $a\in\mathbb{R}^n_{>0}$ of the ODE system $\dot{x}=Ax$ starting at $x(0)\in\mathbb{R}^n_{\ge 0}$, that is, there exists $\gamma>0$ and $K>0$ such that \[\norm{ x(t)-a}\le  K  e^{-\gamma t}\quad \text{for}\quad t\ge 0.\]
  \end{lemma}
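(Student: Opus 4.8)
The plan is to quotient out the single conserved quantity so that the limiting system becomes genuinely contracting, and then to treat the time dependence of $A(t)$ as an exponentially small, \emph{integrable} perturbation of the autonomous limit $\dot x = Ax$. First I would pin down the spectrum of $A$. Since $A$ has zero column sums, $\mathbf 1^\top A=0$ with $\mathbf 1^\top=(1,\dots,1)$, and strong connectivity makes $A$ irreducible, so $A$ is minus the transpose of the weighted Laplacian of a strongly connected digraph. Hence $0$ is a \emph{simple} eigenvalue with a strictly positive right null vector (the equilibrium $a$, normalised so $\mathbf 1^\top a=\mathbf 1^\top x(0)$), while every other eigenvalue has strictly negative real part. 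Thus $\operatorname{rank}A=n-1$ and the left kernel is exactly $\operatorname{span}(\mathbf 1^\top)$, so Lemma~\ref{lem:kernel} applies with $r=1$ and $W=\mathbf 1^\top=(\,W'\ \ I_1\,)$, $W'=(1,\dots,1)$. It produces the $(n-1)\times(n-1)$ matrix $\tilde A:=A_{11}-A_{12}W'$ whose eigenvalues are precisely the nonzero eigenvalues of $A$; hence $\tilde A$ is Hurwitz, say $\norm{e^{\tilde A t}}\le M e^{-\mu t}$ for some $M\ge 1$, $\mu>0$.

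Next I would change coordinates by the matrix $P$ of Lemma~\ref{lem:kernel}. Because the last row of $P$ equals $\mathbf 1^\top$ and \emph{both} $A(t)$ and $A$ have zero column sums, the bottom coordinate $y_n=\mathbf 1^\top x$ is conserved and equals $C:=\mathbf 1^\top x(0)$ for all $t$ (the mass conservation noted after \eqref{eq:equiv}), and the last row of $PA(t)P^{-1}$ stays exactly zero. Writing $PA(t)P^{-1}=PAP^{-1}+P(A(t)-A)P^{-1}$ and using $\norm{A(t)-A}\le C_0\norm{k(t)-k}\le C_0K_1 e^{-\gamma_1 t}$ (each $k_j$ appears in at most two entries of $A(t)$), the remaining coordinates $\bar y$ obey $\dot{\bar y}=(\tilde A+B_{11}(t))\bar y+(A_{12}+B_{12}(t))C$, where $\norm{B_{11}(t)},\norm{B_{12}(t)}\le C_1 e^{-\gamma_1 t}$. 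Setting $\bar e:=\bar y-\bar a$ with $\bar a:=-\tilde A^{-1}A_{12}C$ (the unique equilibrium of the limiting affine system, well defined since $\tilde A$ is invertible, and satisfying $Pa=(\bar a,C)$), I obtain the error equation $\dot{\bar e}=(\tilde A+B_{11}(t))\bar e+f(t)$ with $\norm{f(t)}\le C_2 e^{-\gamma_1 t}$.

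The crux is a perturbation-of-stability estimate. The transition matrix $\Phi(t,s)$ of the homogeneous system $\dot z=(\tilde A+B_{11}(t))z$ satisfies, by Duhamel and Grönwall, $\norm{\Phi(t,s)}\le M e^{-\mu(t-s)}\exp\!\big(M\!\int_s^t\norm{B_{11}(\tau)}\,d\tau\big)$, and since $\int_0^\infty\norm{B_{11}(\tau)}\,d\tau\le C_1/\gamma_1<\infty$ this yields a \emph{uniform} bound $\norm{\Phi(t,s)}\le M' e^{-\mu(t-s)}$: the integrability of the perturbation preserves the exponential decay rate. Variation of parameters, $\bar e(t)=\Phi(t,0)\bar e(0)+\int_0^t\Phi(t,s)f(s)\,ds$, together with the elementary convolution bound of $e^{-\mu(t-s)}$ against $e^{-\gamma_1 s}$, then gives $\norm{\bar e(t)}\le K_0 e^{-\gamma t}$ for any $\gamma<\min(\mu,\gamma_1)$ (and $\gamma=\min(\mu,\gamma_1)$ when $\mu\neq\gamma_1$). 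Transforming back, $x(t)-a=P^{-1}(y(t)-Pa)$ has vanishing last coordinate and first block $\bar e(t)$, so $\norm{x(t)-a}\le\norm{P^{-1}}\,\norm{\bar e(t)}\le K e^{-\gamma t}$, which is the claim.

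I expect the main obstacle to be exactly this crux step: because $B_{11}(t)$ multiplies the state rather than entering additively, a careless Grönwall estimate would let the bound grow and destroy decay. The resolution is that exponential decay of $k(t)-k$ makes $\norm{B_{11}(t)}$ integrable on $[0,\infty)$, which is precisely what keeps $\norm{\Phi(t,s)}$ uniformly exponentially decaying. A secondary point needing care is justifying the spectral gap of $A$ (simple zero eigenvalue, all others in the open left half-plane) from strong connectivity, since this is what makes $\tilde A$ Hurwitz and also guarantees the claimed positivity $a\in\mathbb{R}^n_{>0}$.
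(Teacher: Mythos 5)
Your proof is correct, and the setup (spectral facts about the Laplacian $A$, $r=1$ with $W=\mathbf 1^\top$, invoking Lemma~\ref{lem:kernel} to get a Hurwitz reduced matrix $\tilde A=A_{11}-A_{12}W'$, and the reduction to an $(n-1)$-dimensional affine system) matches the paper's, which performs the same reduction by eliminating variables through the conservation law rather than by conjugating with $P$ --- an equivalent manoeuvre. Where you genuinely diverge is at the crux estimate. The paper writes the error equation as $\dot y = Cy + E(t)$ with the \emph{constant} generator $C=\tilde A$ and lumps the entire time dependence into the forcing $E(t)=(C(t)-C)x(t)+(D(t)-D)T$; it then needs the a priori bound that $x(t)$ is bounded, which it gets from the conservation of $\sum_i x_i$ (together with the implicit forward invariance of the nonnegative orthant), so that $\norm{E(t)}$ decays exponentially and a single Duhamel formula against $e^{Ct}$ finishes the job. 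You instead keep the perturbation $B_{11}(t)$ attached to the state, bound the transition matrix $\Phi(t,s)$ of the nonautonomous homogeneous system via Gr\"onwall, and use integrability of $\norm{B_{11}(\cdot)}$ (guaranteed by the exponential decay of $k(t)-k$) to preserve the decay rate uniformly in $s$. Your route buys independence from the a priori boundedness of $x(t)$ --- boundedness falls out of the uniform bound on $\Phi(t,s)$ rather than being an input --- and it makes explicit the point the paper glosses over, namely why a state-multiplying perturbation does not degrade the exponential rate; the paper's route is shorter because the conservation law hands it the needed bound for free. Both yield the same rate $\gamma<\min(\mu,\gamma_1)$, and your remark about the resonant case $\mu=\gamma_1$ (the extra factor of $t$) mirrors the paper's final $K_0e^{-\gamma_0 t}(1+t)\le Ke^{-\gamma t}$ step.
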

  
  \begin{proof}
    We will first rephrase the ODE system such that standard theory is applicable. Let rank of $A$ be $n-r$. Let $W$ be as defined in Lemma~\ref{lem:kernel}, that is, $W$ be an $r\times n$ matrix comprising of $r$ linearly independent left kernel vectors of $A$ so that  $WA=0$. Here since rank of $A$ is $n-r$, the rows of $W$ would form a basis for the left kernel of $A$. And as in Lemma~\ref{lem:kernel}, further suppose $W$ is in the row reduced form, that is, \[W=\begin{pmatrix}  W' & I_r\end{pmatrix}.\] Then 
    \begin{equation}
      \dot{x}=A x
    \end{equation}
    is a linear dynamical system with $r$ conservation laws (one for each row of $W$). Let $Wx(0)=T\in\mathbb{R}^r$ be the vector of conserved amounts.  Let $\hat x=(x_1,\ldots,x_{n-r})$ and $\tilde x=(x_{n-r+1},\ldots,x_{n})$.  We will consider the (equivalent) dynamical system in which $r$  variables  are eliminated, expressed through the conservation laws 
    \[T=Wx=\begin{pmatrix} W' & I_r \end{pmatrix}x, \quad \text{or} \quad \tilde x=T-W' \hat x.\]
    As in Lemma~\ref{lem:kernel}, let $A$ be given as \[A=\begin{pmatrix} A_{11} & A_{12} \\ A_{21} & A_{22} \end{pmatrix},\] where $A_{11}$ is a $(n-r)\times(n-r)$ matrix, $A_{12}$ is a $ (n-r)\times r$ matrix, $A_{21}$ is a $r\times (n-r)$ matrix, and $A_{22}$ is a $r\times r$ matrix. This yields
    \begin{equation}
      \begin{aligned}\label{eq:reduced_limit}
        \dot{\hat x}  &=\begin{pmatrix} A_{11} & A_{12}  \end{pmatrix}\begin{pmatrix} \hat x \\ \tilde x \end{pmatrix} =\begin{pmatrix} A_{11} & A_{12}  \end{pmatrix}\begin{pmatrix} \hat x \\ T-W' \hat x \end{pmatrix} \\
        &= (A_{11}-A_{12}W')\hat x +A_{12}T \\
        &= C\hat x+DT, 
      \end{aligned}
    \end{equation}
    with $C=A_{11}-A_{12}W'$ and $D=A_{12}$. We call this as the {\em reduced ODE system}. Note that this reduced system has only $n-r$ variables and that the conservation laws are built directly into it. This implies that the differential equation changes if $T$ is changed. The role of this construction is so that we can work only with the non-zero eigenvalues of the $A$.

    As we also have $WA(t)=0$ for all $t\ge 0$, the ODE $\dot x=A(t)x$ can also be similarly reduced to
    \begin{equation}\label{eq:reduced}
      \dot{\hat x}=C(t)\hat x+D(t)T,
    \end{equation}    
    with $C(t)=A_{11}(t)-A_{12}(t)W'$ and $D(t)=A_{12}(t)$, where analogous to $A_{11}$, we define $A_{11}(t)$ to be the top-left $(n-r)\times(n-r)$ sub-matrix of $A(t)$ and  analogous to $A_{12}$, we define $A_{12}(t)$ to be the top-right $ (n-r)\times r$ sub-matrix of $A(t)$.

    Now if $a$ is the equilibrium of the ODE system $\dot{x}=Ax$ starting at $x(0)$, then $\hat a=(a_1,\ldots,a_{n-r})$ is an equilibrium of the reduced ODE sytem $\dot{\hat x}= C\hat x+DT$ starting at $\hat x(0)=(x_1(0),\ldots,x_{n-r}(0))$. Suppose we are able to prove that solutions of reduced ODE $\dot{\hat x}=C(t)\hat x+D(t)T$ starting at $\hat x(0)$ converges exponentially fast towards $\hat a$ then because of the conservation laws $\tilde x=T-W' \hat x$, we would also have that solutions of $\dot x = A(t)x$ starting at $x(0)$ converges exponentially fast towards $a$. So henceforth, we will work only with the reduced ODE systems. For notational convinience, we will drop the hats off $\hat x$ and $\hat a$ and simply refer to them as $x$ and $a$ respectively.

    By subtracting and adding terms to the reduced ODE system (in equation~\ref{eq:reduced}), we have
    \begin{align*}
      \dot{x} &=C(t)x +D(t)T \\
              &= C x +DT+(C(t)-C)x + (D(t)-D)T.
    \end{align*}
    As $a$ is an equilibrium of the ODE system $\dot{x}=C x+DT$, we have $C a+DT=0$. 
    
    Define $y=x-a$.  Then 
    \begin{align*}
      \dot{y} &= C x +DT+(C(t)-C)x + (D(t)-D)T \\
              &=C y +C a +DT+(C(t)-C)x + (D(t)-D)T \\
              &=C y +(C(t)-C)x(t) + (D(t)-D)T \\
              &= C y +E(t)
    \end{align*}
    where it is used that $Ca+DT=0$, and $E(t)=(C(t)-C)x(t) + (D(t)-D)T$.
    
    The solution to the above ODE system is known to be
    \begin{equation}\label{eq:solution}
      y(t)= e^{C t}y(0)+\int_0^t e^{C(t-s)}E(s)\,ds.
    \end{equation}
    We have, using \eqref{eq:solution} and the triangle inequality,
    \[\norm{y(t)}\le \norm{e^{C t}y(0)}+\int_0^t \norm{e^{C(t-s)}}\norm{E(s)} ds.\]
    Now $A$ as defined (see equation~\ref{eq:matrixA} with $k$ inserted for $k(t)$) would form a Laplacian matrix over a strongly connected graph and so it follows that all the eigenvalues of $A$ are either zero or have negative real part. And using $C=A_{11}-A_{12}W'$ and Lemma \ref{lem:kernel} it follows that all eigenvalues of $C$ have negative real part. Hence it follows that 
    \[\norm{e^{C t}}\le  K_2 e^{-\gamma_2 t},\]
    where $0<\gamma_2<-\Re(\lambda_1)$ and $K_2>0$. Here $\lambda_1$ is the eigenvalue of $C$ with the largest real part.
    
    The matrices $C(t)$ and $D(t)$ are linear in $k(t)$. And as $k(t)$ converges exponentially fast towards $k$, it follows that the matrices $C(t)$ and $D(t)$  converge exponentially fast towards $C$ and $D$ respectively. Hence it follows that 
    \begin{align*}
      \norm{E(t)} &= \norm{(C(t)-C)x(t) + (D(t)-D)T} \\
      &\le \norm{(C(t)-C)}\norm{x(t)}+ \norm{(D(t)-D)}\norm{T} \\
      &\le K_3 e^{-\gamma_3 t}+ K_4 e^{-\gamma_4 t}\\
      &\le K_5 e^{-\gamma_5 t}
    \end{align*}
    where
    \begin{itemize}
    \item $\norm{(C_0(t)-C)}\norm{x(t)}\le K_3 e^{-\gamma_3 t}$ for some $K_3,\gamma_3>0$  as $C(t)$ converges exponentially fast towards $C$ and $x(t)$ is bounded (as in the original ODE $\sum_{i=1}^n x_i$ is conserved), and
    \item $\norm{(D_0(t)-D)}\norm{T}\le K_4 e^{-\gamma_4 t}$ for some $K_4,\gamma_4>0$ as $D(t)$ converges exponentially fast towards $D$, and
    \item $K_5=\frac{1}{2}\max(K_3, K_4)$ and $\gamma_5=\min(\gamma_3,\gamma_4)$.
    \end{itemize}

    Collecting all terms we have for all $t\ge 0$,
    \begin{align*}
      \norm{y(t)} &\le \norm{y(0)}K_2e^{-\gamma_2 t}+\int_0^t  K_2 e^{-\gamma_2 (t-s)}\times K_5 e^{-\gamma_5 s}ds \\
      &\le K_0e^{-\gamma_0 t}+K_0\int_0^t  e^{-\gamma_0 t}ds\\
      &= K_0e^{-\gamma_0 t}(1+t)\\
      &\le K e^{-\gamma t} 
    \end{align*} 
    by choosing $K_0=\max\left( K_2K_5, \norm{y(0)}K_2\right )$ and $\gamma_0=\min(\gamma_1,\gamma_2)$. In the last line $\gamma$  is chosen such that $0<\gamma<\gamma_0$ and $K$ is sufficiently large. Since $y(t)=x(t)-a$ we have, \[\norm{x(t)-a}\le  K e^{-\gamma t},\] 
    as required.

  \qed\end{proof}

  We will now prove Theorem~\ref{th:EMexpconv1}. For the sake of convenience, we first recall the statement.\\\
  \par\noindent\textbf{Theorem~\ref{th:EMexpconv1}.}
  {\em For the {Baum Welch Reaction System} $(BW(\mathcal{M},h^*,v^*,L),k)$ with permissible rates $k$, if the concentrations of $\theta$ and $\psi$ species are held fixed at a {\em positive} point then the Forward, Backward and Expection step reaction systems on $\alpha,\beta,\gamma$ and $\psi$ species converge to equilibrium exponentially fast.}
  \begin{proof}\label{pf:EMexpconv1}
    It follows by repeated use of Lemma~\ref{lem:expconv}. For $l=1$ the forward reaction network can be interpretated as the following molecular reactions:
    \begin{equation*}
      \begin{gathered}
        \alpha_{1h}\ce{->[\pi_{h^*}\psi_{h^*w}E_{1w}]} \alpha_{1h^*}\\
        \alpha_{1h^*}\ce{->[\pi_{h}\psi_{hw}E_{1w}]} \alpha_{1h}
      \end{gathered}
  \end{equation*}
  for all $h\in H\setminus\{h^*\}$ and $w\in V$, as they are dynamically equivalent. Here the effective rate constants ($\pi_{h^*}\psi_{h^*w}E_{1w}$ or $\pi_{h}\psi_{hw}E_{1w}$) are independent of time and so the conditions of Lemma~\ref{lem:expconv} are fulfilled. Thus this portion of the reaction network converges exponentially fast.

  The rest of the forward reaction network can be similarly interpretated as the following molecular reactions:
  \begin{equation*}
    \begin{gathered}
      \alpha_{l h}\ce{->[\alpha_{l-1,g}\theta_{gh^*}\psi_{h^*w}E_{lw}]} \alpha_{l h^*}\\
      \alpha_{l h^*}\ce{->[\alpha_{l-1,g}\theta_{gh}\psi_{hw}E_{lw}]} \alpha_{l h}
    \end{gathered}
  \end{equation*}
  for all $g\in H$, $h\in H\setminus\{h^*\}, l=2,\ldots, L$ and $w\in V$. For layers $l=2,\ldots,L$ of the forward reaction network we observe that the effective rate constants ($\alpha_{l-1,g}\theta_{gh^*}\psi_{h^*w}E_{lw}$ or $\alpha_{l-1,g}\theta_{gh}\psi_{hw}E_{lw}$) for layer $l$ depend on time only through  $\alpha_{l-1,g}$. If we suppose that the concentration of $\alpha_{l-1,g}$ converges exponentially fast, then we can use Lemma~\ref{lem:expconv} to conclude that the concentration of $\alpha_{lh}$ also converges exponentially fast. Thus using Lemma~\ref{lem:expconv} inductively layer by layer we conclude that forward reaction network converges exponentially fast. The backward reaction network converges exponentially fast, similarly.
  
  For the expectation reaction network it likewise follows by induction. But here, notice if we interpreted the expectation network similarly into molecular reactions, the effective rate constants would depend on time through the products such as $\alpha_{lh}\beta_{lh}$ or $\alpha_{lh}\beta_{l+1,h}$. So to apply Lemma~\ref{lem:expconv} we need the following: If $\alpha_l(t)$ and $\beta_l(t)$ converge exponentially fast towards $a_l$ and $b_l$ then the product $\alpha_l(t)\beta_l(t)$ converges exponentially fast towards $a_l b_l$ as
  \begin{align*}
    \norm{\alpha_l(t)\beta_l(t)-a_l b_l}&=\norm{(\alpha_l(t)-a_l)(\beta_l(t)-b_l)+a_l( \beta_l(t)-b_l)+b_l( \alpha_l(t)-a_l)} \\
                                        &\le \norm{\alpha_l(t)-a_l}\norm{\beta_l(t)-b_l}+K \norm{\beta_l(t)-b_l}+K \norm{\alpha_l(t)-a_l},
  \end{align*}
  where $K$ is some suitably large constant. We can further observe that $\alpha_l(t)\beta_l(t)$ converges exponentially fast towards $a_l b_l$ at rate $\gamma=\min(\gamma_a,\gamma_b)$, where $\gamma_a$ and $\gamma_b$, respectively, are the exponential convergence rates of $\alpha_l(t)$ and $\beta_l(t)$. A similar argument goes for the products of the form $\alpha_l(t)\beta_{l+1}(t)$. And thus the expectation reaction network, also converges exponentially fast.
\qed\end{proof}

We will now prove Theorem~\ref{th:EMexpconv2}. For the sake of convenience, we first recall the statement.\\\
\par\noindent\textbf{Theorem~\ref{th:EMexpconv2}.}
{\em For the {Baum Welch Reaction System} $(BW(\mathcal{M},h^*,v^*,L),k)$ with permissible rates $k$, if the concentrations of $\alpha,\beta,\gamma$ and $\xi$ species are held fixed at a {\em positive} point then the Maximization step reaction system on $\theta$ and $\psi$ converges to equilibrium exponentially fast.}
\begin{proof}\label{pf:EMexpconv2}
  Exponential convergence of the maximisation network follows by a similar layer by layer inductive use of Lemma~\ref{lem:expconv}. 
\qed\end{proof}

\end{appendices}
\end{document}